\documentclass[a4paper,twocolumn,11pt,unpublished]{quantumarticle}
\pdfoutput=1

\usepackage[utf8]{inputenc}
\usepackage[english]{babel}
\usepackage[T1]{fontenc}
\usepackage{amsmath,amssymb,amsthm,mathtools}
\usepackage{microtype}
\usepackage{enumitem}
\usepackage{graphicx}
\usepackage[numbers,sort&compress]{natbib}\usepackage{hyperref}
\usepackage[nameinlink,noabbrev]{cleveref}
\usepackage{xcolor}
\usepackage{bm}

\newcommand{\Tr}{\mathrm{Tr}}
\newcommand{\KL}{\mathrm{KL}}

\newcommand{\cB}{\mathcal{B}}
\newcommand{\cF}{\mathcal{F}}
\newcommand{\id}{I}
\newcommand{\bbP}{\mathbb{P}}
\newcommand{\bbE}{\mathbb{E}}

\newcommand{\norm}[1]{\left\lVert #1 \right\rVert}

\theoremstyle{plain}
\newtheorem{theorem}{Theorem}
\newtheorem{lemma}{Lemma}
\newtheorem{corollary}{Corollary}

\theoremstyle{remark}

\hypersetup{
    colorlinks=true,
    linkcolor=blue,
    citecolor=blue,
    urlcolor=blue,
    pdftitle={Adaptive vs. Non-Adaptive Tomography under Pauli Basis Measurements: A Prefix/Tree Separation},
    pdfauthor={Alireza Goldar, Zhen Qin, Zhihui Zhu, Michael B. Wakin, and Zhe-Xuan Gong}
}

\title{An Exponential Advantage for Adaptive Tomography of Structured States under Pauli Basis Measurements}

\author{Alireza Goldar}
\affiliation{Department of Electrical Engineering, Colorado School of Mines, Golden, Colorado 80401, USA}

\author{Zhen Qin}
\affiliation{Michigan Institute for Computational Discovery and Engineering, Department of Electrical Engineering and Computer Science, and Department of Statistics, University of Michigan, Ann Arbor, Michigan 48109, USA}

\author{Zhihui Zhu}
\affiliation{Department of Computer Science and Engineering, The Ohio State University, Columbus, Ohio 43201, USA}

\author{Zhe-Xuan Gong}
\affiliation{Department of Physics, Colorado School of Mines, Golden, Colorado 80401, USA}

\author{Michael B. Wakin}
\affiliation{Department of Electrical Engineering, Colorado School of Mines, Golden, Colorado 80401, USA}

\begin{document}

\begin{abstract}
Broad claims about whether adaptivity helps in quantum state tomography can be misleading unless the state family, measurement architecture, and error metric are specified carefully. We study a restricted but physically important regime: single-copy quantum state tomography under local Pauli basis measurements, where the allowed measurement settings are tensor-product measurement operators built from local single-qubit Pauli operators, and performance is measured in trace distance with high probability in a minimax sense over a known structured family. We construct an explicit discrete prefix/tree family of states for which adaptive measurement selection achieves polynomial copy complexity, while every non-adaptive design requires exponentially many copies in the worst case. The adaptive upper bound comes from stagewise prefix recovery using hierarchical breadcrumb information revealed by partial prefix matches. The non-adaptive lower bound is based on a rare-prefix mechanism: every fixed design under-samples some deep prefix subset, and outside that subset the competing hypotheses induce identical one-shot laws, so only an exponentially small fraction of the measurement budget contributes to the KL divergence between the full data distributions. The result isolates a concrete regime in which adaptivity provably changes the sample-complexity scaling under the experimentally common local Pauli measurement architecture.
\end{abstract}

\maketitle

\section{Introduction}
\subsection{Overview}

The efficiency of quantum state tomography (QST) generally scales poorly with system size, motivating several broad lines of work to reduce its complexity, including structured quantum state tomography, protocols with stronger measurement resources, adaptive measurement protocols, and broader approaches to quantum-state learning and prediction \cite{Aaronson2007Learnability,AnshuArunachalam2024Survey,EisertHangleiterWalkRothMarkhamParekhChabaudKashefi2020,HuangKuengPreskill2020,ChenHuangLiLiuSellke2023,HuangBroughtonCotlerChenLiMohseniNevenBabbushKuengPreskillMcClean2022,ChenCotlerHuangLi2021,ChenGong2025,ChenGongYe2024,Yu2023,FlammiaODonnell2021,AaronsonChenHazanKaleNayak2018}. However, a central lesson from this literature is that the usefulness of adaptivity depends strongly on the exact measurement setting and learning objective, meaning that broad claims of ``adaptivity helps'' or  ``adaptivity does not help'' can often be misleading. In this paper, we identify a concrete measurement setting and quantum state state family in which adaptive measurement protocols provably changes the worst-case sample complexity scaling. Specifically, we demonstrate a rigorous exponential separation: for a structured family of multi-qubit states, an adaptive strategy achieves polynomial sample complexity, whereas every non-adaptive design requires exponentially many state copies. Crucially, this separation comes purely from sequential basis selection, without changing the underlying physical measurement architecture.

To be precise, we study individual-copy quantum state tomography under local Pauli basis measurements. The allowed measurement settings are tensor-product Pauli bases built from the three single-qubit observables $X$, $Y$, and $Z$, yielding a set $\cB = \{X,Y,Z\}^n$ of $3^n$ possible settings. One experimental shot consists of choosing a basis $b \in \cB$, measuring an unknown density matrix $\rho$ on $(\mathbb{C}^2)^{\otimes n}$, and observing an outcome $o \in \{0,1\}^n$. The resulting data are distributed according to Born's rule as $p(o \mid b,\rho) = \Tr\!\big(\Pi_{o}^{(b)} \rho\big)$, where $\Pi_{o}^{(b)}$ denotes the projector associated with outcome $o$ under measurement setting $b$. Our success criterion is trace-distance accuracy with high probability in a minimax sense over a known, fixed family $\cF$ of density matrices.

The separation is established over an explicit, structured ``prefix/tree'' family of states indexed by a hidden Pauli basis string. The key feature of this family is that information is distributed hierarchically across prefixes of the hidden string, rather than being concentrated entirely at the full length-$n$ level. As a result, partial agreement between a measurement basis and the hidden string is already informative: matching a correct prefix reveals a corresponding prefix-level signal, even if the remaining suffix does not match. This creates a breadcrumb structure that can be exploited sequentially by an adaptive procedure, in contrast to a non-hierarchical single-spike setting where mismatched measurements reveal nothing.

For this prefix/tree family, we prove two complementary results. First, we provide a constructive adaptive strategy that performs stagewise prefix recovery. At stage $k$, the procedure tests the three possible next Pauli symbols using a prefix statistic, yielding a polynomial copy complexity. Under our concrete parameterization, the adaptive procedure recovers the state up to trace distance $\varepsilon$ using $\widetilde{O}(n^3 \varepsilon^{-2})$ copies. Second, we prove that every non-adaptive design is exponentially costly. Any fixed non-adaptive allocation necessarily under-samples some long-prefix subset; by analyzing a hard two-point subproblem inside that subset, we show that non-adaptive strategies require an exponential $\Omega\big(3^n \varepsilon^{-2} \log(1/\eta)\big)$ number of copies in the worst case to succeed with probability $1-\eta$.

Placing this result in context, our theorem isolates a regime where adaptivity fundamentally changes the worst-case sample complexity within a specific structured family. The prefix/tree family is primarily a proof-oriented construction designed to isolate a genuine limitation of non-adaptive local Pauli tomography. However, it connects naturally to broader themes in structured tomography~\cite{JamesonQinGoldarWakinZhuGong2024,LovitzLowe2024,LangeKebricBuserSchollwockGrusdtBohrdt2023} and to commuting-Pauli and stabilizer-type constructions, which play important roles in quantum error correction, measurement-based quantum computation, and many-body physics~\cite{GuOlivieroLeone2024,Gottesman1997,RaussendorfBriegel2001,Kitaev2003}. 

\subsection{Related Work}

One important line of work studies whether quantum states can be learned efficiently, both in broad quantum-learning formulations and in worst-case tomography under single-copy measurements \cite{Aaronson2007Learnability,AnshuArunachalam2024Survey}. In this direction, Chen et al.~\cite{ChenHuangLiLiuSellke2023} showed that for general states, adaptivity does not improve the minimax sample complexity for trace-distance learning, though it can help for infidelity. By contrast, our work focuses on a structured family where adaptivity fundamentally changes the minimax trace-distance scaling, highlighting how strongly the utility of adaptivity depends on the underlying state family and measurement restrictions.

A parallel line of research investigates the sometimes exponential improvements possible when the learner is granted stronger physical resources, such as quantum memory or entangled measurements among two or more state copies~\cite{HuangBroughtonCotlerChenLiMohseniNevenBabbushKuengPreskillMcClean2022,ChenCotlerHuangLi2021,ChenGong2025,ChenGongYe2024}. While these works show substantial gains under richer sensing models, our separation arises purely from sequential basis selection, with single-copy measurement in a local Pauli basis that is practical for current quantum hardware~\cite{HuangKuengPreskill2020,Preskill2018NISQ}.

Within the Pauli-basis measurement framework, prior work has extensively explored observable estimation and identity testing, where the difficulty often stems from only a small subset of settings revealing the informative signal~\cite{Yu2023,ChenGongYe2024}. While philosophically similar, our objective is different: we aim for a minimax adaptive-versus-non-adaptive separation for state identification over a carefully designed structured family rather than observable estimation.

From an algorithmic perspective, substantial work highlights the practical benefits of active learning and sequential design for empirical efficiency in tomography~\cite{HuszarHoulsby2012,MahlerRozemaDarabiFerrieBlumeKohoutSteinberg2013,GranadeFerrieFlammia2017,LangeKebricBuserSchollwockGrusdtBohrdt2023}. Our paper complements this literature by providing a rigorous theorem-level separation under a specific measurement restriction, rather than proposing an empirical heuristic.

Our approach also connects to structured low-complexity tomography under local measurements, including tensor network-inspired and compressed sensing-based settings~\cite{CramerPlenioFlammiaSommaGrossBartlettLandonCardinalPoulinLiu2010,GrossLiuFlammiaBeckerEisert2010,FlammiaGrossLiuEisert2012,JamesonQinGoldarWakinZhuGong2024,LovitzLowe2024}, which show that exploiting structure fundamentally changes the learning problem. Although our structured family is introduced primarily as a discrete, proof-oriented construction, it can be represented in matrix product operator (MPO) form, providing a natural bridge to this literature and to the broader tensor-network formalism for mixed-state representations \cite{VerstraeteGarciaRipollCirac2004}.

Finally, some of the closest neighboring ideas involve prefix- or tree-related constructions. For example, Flammia and O'Donnell use a branch-and-prune prefix motif for learning Pauli channel error distributions~\cite{FlammiaODonnell2021}, and Aaronson et al.\ discuss prefix-dependent measurement selection in an online-learning framework~\cite{AaronsonChenHazanKaleNayak2018}. However, the role of the prefix here is distinct: our hierarchy is embedded directly into the state family itself through the observables $P_{b^\star}^{(k)}$ and coefficients $\beta_k$. This structural embedding uniquely drives both sides of our theorem, enabling stagewise adaptive recovery as well as the non-adaptive lower bound via the rare deep-prefix subset. To our knowledge, this exact combination of family design, measurement restriction, and rigorous exponential separation has not been established before.

\subsection{Contributions and Paper Organization}

This paper makes four contributions. It introduces a single-copy product-Pauli tomography problem over an explicit discrete prefix/tree family that is structurally different from spike-only Pauli constructions; it develops a constructive adaptive policy that exploits prefix-level information and achieves polynomial sample complexity; it proves a worst-case exponential lower bound for every non-adaptive Pauli design; and it places these results within the broader literature on adaptive tomography, Pauli-restricted learning, and structured quantum-state inference. The paper is organized accordingly. We first formalize the measurement model, the trace-distance success criterion, and the prefix/tree state family. We then present the adaptive upper bound via stagewise prefix recovery, followed by the non-adaptive lower bound via the rare-subset argument. We conclude by discussing how the separation fits into the surrounding literature and what directions it suggests for future work.

\section{Problem Setup and Main Results}
\label{sec:setup_main}

We now formalize the measurement model, the structured-family setting, and the precise claims proved in the remainder of the paper.

\subsection{Measurement model and protocol classes}
\label{subsec:measurement_model}

Let $n$ be the number of qubits and let $D := 2^n$ denote the Hilbert-space dimension. Throughout the paper, the physical measurement architecture is fixed to \emph{single-copy} tensor-product Pauli measurements, a local measurement setting that is especially natural for near-term and measurement-limited quantum platforms \cite{HuangKuengPreskill2020}. Thus the allowed measurement settings are
\[
\cB = \{X,Y,Z\}^{n},
\qquad
|\cB| = 3^n.
\]
A single shot proceeds as follows: one chooses a basis $b=(b_1,\dots,b_n)\in\cB$, measures one fresh copy of the unknown state $\rho$, and observes an outcome $o\in\{0,1\}^n$. The outcome law is given by Born's rule,
\[
p(o\mid b,\rho) = \Tr\!\big(\Pi_o^{(b)} \rho\big),
\]
where $\{\Pi_o^{(b)}\}_{o\in\{0,1\}^{n}}$ are the rank-one projectors associated with the product eigenbasis of $\bigotimes_{i=1}^n \sigma_{b_i}$. We let $M$ denote the total number of such single-shot measurements.

A measurement protocol using $M$ state copies is called \emph{adaptive} if the basis selected at a time $t \in \{2,\dots,M\}$ may depend on the past transcript
\[
(B_1,O_1),\dots,(B_{t-1},O_{t-1}).
\]
Here $B_i$ denotes the chosen measurement basis at round $i$, and $O_i$ denotes the corresponding observed outcome.

A measurement protocol is called \emph{non-adaptive} if the entire measurement schedule is fixed before any outcomes are observed. This includes deterministic schedules as well as randomized schedules, provided that the randomization is independent of the measurement outcomes. A non-adaptive protocol may therefore distribute its measurement budget arbitrarily over $\cB$, but it may not update that allocation online using the observed data.

The comparison studied in this paper is therefore \emph{not} between different measurement architectures. Both sides are restricted to the same single-copy product-Pauli measurement architecture; the only difference is whether the next basis can depend on previous outcomes.

\subsection{Structured-family formulation and success criterion}
\label{subsec:criterion}

Our goal is trace-distance accurate recovery with high probability over a known structured family. Specifically, for a family $\cF$ of density matrices, an $M$-copy protocol is said to be \emph{$(\varepsilon,\eta)$-accurate over $\cF$} if
\[
\sup_{\rho\in\cF}
\bbP_{\rho}\!\left[
\frac12 \norm{\widehat\rho - \rho}_{1} > \varepsilon
\right]
\le \eta.
\]
Thus the paper adopts a minimax, worst-case viewpoint over the family under study, in line with the modern sample-complexity perspective on rigorous quantum tomography guarantees \cite{HaahHarrowJiWuYu2017,ODonnellWright2016}.

A key point is that we are \emph{not} considering fully arbitrary unknown states. The family $\cF$ is fixed in advance and known to the learner; the unknown object is the discrete index selecting one member of that family. In the family used here, the coefficients are assumed known and only the hidden Pauli string index is unknown. Consequently, exact identification of that discrete index immediately yields exact recovery of the corresponding state within the family.

\subsection{The prefix/tree family}
\label{subsec:prefix_family_glance}

The family studied in this paper is indexed by a hidden basis string
\[
b^\star = (b^\star_1,\dots,b^\star_n) \in \cB.
\]
Based on $b^\star$, for each $k\in\{1,\dots,n\}$, we define an associated prefix Pauli operator
\[
P_{b^\star}^{(k)}
:=
\left(\bigotimes_{i=1}^{k} \sigma_{b^\star_i}\right)\otimes \id^{\otimes(n-k)}.
\]
For a fixed set of coefficients $\alpha,\beta_1,\dots,\beta_{n-1}\in\mathbb{R}$, we also define
\begin{equation}
\rho_{b^\star}
=
\frac{1}{D}
\left(
\id
+
\sum_{k=1}^{n-1}\beta_k P_{b^\star}^{(k)}
+
\alpha P_{b^\star}^{(n)}
\right),
\qquad
D=2^n.
\label{eq:prefix_family_main}
\end{equation}
Finally, we define the family of density matrices of interest as
\[
\cF_n(\alpha,\beta)
:=
\left\{
\rho_{b^\star} : b^\star\in\{X,Y,Z\}^n
\right\},
\]
where $\beta=(\beta_1,\dots,\beta_{n-1})$. Under a simple sufficient condition on $\alpha$ and $\beta$, each $\rho_{b^\star} \in \cF_n(\alpha,\beta)$ is a valid density matrix. The details of that condition, as well as the geometric interpretation of the family, are deferred to the next section. For now, the important point is that the family embeds information hierarchically across prefixes of the hidden string $b^\star$: a basis that matches the first $k$ symbols of $b^\star$ already reveals the prefix coefficient $\beta_k$, even if the remaining suffix does not match.

\subsection{Main results}
\label{subsec:main_results}

We now state the two theorem-level results that drive the paper: a constructive adaptive upper bound and a non-adaptive lower bound.

Throughout the prefix/tree analysis, the coefficients $\alpha,\beta_1,\dots,\beta_{n-1}$ are fixed and known to the learner; the only unknown quantity is the hidden Pauli string index $b^\star$.

We begin with the adaptive side, which shows that the prefix/tree structure can be exploited constructively to recover the hidden string with polynomial copy complexity.

\begin{theorem}[Adaptive copy complexity on the prefix/tree family]
\label{thm:adaptive_prefix}
Fix $n\ge 1$ and coefficients $\alpha,\beta_1,\dots,\beta_{n-1}$, assumed known to the learner, such that
\[
\alpha \neq 0, \quad
\beta_k \neq 0 \ \ \text{for all } k=1,\dots,n-1.
\]
Assume that the associated family $\cF_n(\alpha,\beta)$ is well defined as a family of density operators. Then for every confidence level $\eta\in(0,1)$, there exists an adaptive protocol under the single-copy product-Pauli measurement model such that, for every state $\rho_{b^\star}\in\cF_n(\alpha,\beta)$, the protocol identifies the hidden index $b^\star$ correctly with probability at least $1-\eta$ using at most
\[
M_{\mathrm{ad}}
\le
24 \log\!\Big(\frac{6n}{\eta}\Big)
\left(
\sum_{k=1}^{n-1}\beta_k^{-2} + \alpha^{-2}
\right)
\]
copies.

In particular, once the index is identified, the protocol can output $\widehat\rho=\rho_{b^\star}$ exactly. Hence the same protocol is trivially $(\varepsilon,\eta)$-accurate over $\cF_n(\alpha,\beta)$ for any target accuracy $\varepsilon>0$; see Corollary~\ref{cor:adaptive_prefix_tomography} for the corresponding tomography-form statement.
\end{theorem}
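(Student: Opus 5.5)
The plan is to recover $b^\star$ one symbol at a time, using the prefix statistic described in the introduction. At stage $k\in\{1,\dots,n\}$ we assume the prefix $b^\star_1,\dots,b^\star_{k-1}$ has already been identified correctly; at $k=1$ this prefix is empty. For each candidate $c\in\{X,Y,Z\}$ we measure $m_k$ fresh copies in the basis
\[
b^{(k,c)}=(b^\star_1,\dots,b^\star_{k-1},c,Z,\dots,Z),
\]
where the suffix is arbitrary. Each shot yields the $\pm1$ statistic $s=\prod_{i\le k}(-1)^{o_i}$, the outcome of the observable $Q_{k,c}:=\sigma_{b^\star_1}\otimes\cdots\otimes\sigma_{b^\star_{k-1}}\otimes\sigma_c\otimes\id^{\otimes(n-k)}$, which is diagonal in this product eigenbasis. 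The adaptivity lies only in inserting the already-learned prefix into the bases used at later stages.

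The first key step is the expectation computation. We have $\bbE[s]=\Tr(\rho_{b^\star}Q_{k,c})$, and by orthogonality of Pauli strings, $\Tr(P Q)=D\,\delta_{P,Q}$, only terms of \eqref{eq:prefix_family_main} equal to $Q_{k,c}$ survive. The identity term does not survive, since $Q_{k,c}\neq\id$. A term $P^{(j)}_{b^\star}$ with $j<k$ does not survive, because it has $\id$ at site $k$ while $Q_{k,c}$ has $\sigma_c$. A term with $j>k$ does not survive either, because it carries a non-identity Pauli at site $j>k$ where $Q_{k,c}$ has $\id$. Hence
\[
\bbE[s]=\beta_k\,\mathbf{1}[c=b^\star_k]\quad(k<n),\qquad \bbE[s]=\alpha\,\mathbf{1}[c=b^\star_n]\quad(k=n).
\]
Write $\gamma_k$ for $\beta_k$ or $\alpha$ accordingly; $\gamma_k\neq0$ by hypothesis. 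The decision rule sets $\widehat b_k=\arg\max_c|\bar s_{k,c}|$, where $\bar s_{k,c}$ is the empirical mean of $s$ over the $m_k$ shots for candidate $c$.

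The second step is concentration and a union bound. If every empirical mean lies within $|\gamma_k|/2$ of its expectation, then the correct candidate has $|\bar s|>|\gamma_k|/2$ and the two wrong ones have $|\bar s|<|\gamma_k|/2$, so the decision is correct. Hoeffding's inequality for $\pm1$ variables gives failure probability at most $2\exp(-m_k\gamma_k^2/8)$ per candidate. Taking $m_k=8\gamma_k^{-2}\log(6n/\eta)$ (rounded up) makes this at most $\eta/(3n)$. A union bound over the $3$ candidates and $n$ stages then bounds the total failure probability by $\eta$. Conditioning on correct earlier stages is harmless: on the good event every stage uses the correct prefix, and the shots used at each stage are fresh.

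Summing $3m_k$ over all stages gives
\[
24\log(6n/\eta)\Big(\sum_{k<n}\beta_k^{-2}+\alpha^{-2}\Big),
\]
and exact identification of $b^\star$ then gives exact output $\widehat\rho=\rho_{b^\star}$. The main point needing care is the Pauli-orthogonality bookkeeping, which ensures that deeper and shallower prefix terms contribute exactly zero. The integer rounding of $m_k$ adds at most $3n$ copies. This can be absorbed into the stated bound: validity of the family forces $|\gamma_k|\le1$, so $8\gamma_k^{-2}\log(6n/\eta)\ge 8\log 6>1$, and each ceiling at most doubles $m_k$ relative to this. If needed, the constant can be stated as holding up to this rounding.
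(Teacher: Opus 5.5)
Your proposal is correct and follows the paper's proof essentially step for step. Both use the same stagewise three-candidate test on the prefix statistic, the same Pauli-orthogonality expectation pattern (Lemma~\ref{lem:prefix_selective_expectation}), Hoeffding at deviation $|\gamma_k|/2$, a union bound over the $3n$ candidate events, and the count $M_{\mathrm{ad}}=3\sum_k m_k$; padding with $Z$ instead of $X$ is immaterial.

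The paper ignores integrality of $m_k$ entirely, and your rounding remark does not quite justify the stated constant: a ceiling that ``at most doubles'' $m_k$ only yields $48$, not $24$. The ceiling can genuinely be absorbed with a small change. Use a one-sided Hoeffding bound for the correct candidate, so the per-stage failure probability is at most $5e^{-m_k\gamma_k^2/8}$. Then $m_k=\lceil 8\gamma_k^{-2}\log(5n/\eta)\rceil$ suffices, and this is below $8\gamma_k^{-2}\log(6n/\eta)$ because $8\gamma_k^{-2}\log(6/5)>1$ when $|\gamma_k|\le 1$.
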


The next theorem gives the complementary non-adaptive lower bound: under the same single-copy product-Pauli measurement architecture, every non-adaptive design requires exponentially many copies in the worst case over the same family.

\begin{theorem}[Non-adaptive lower bound]
\label{thm:nonad_lb_full}
There exist universal constants $c_0,c_1>0$ such that the following holds. Let $\eta\in(0,1/4]$ and let $\cF_n(\alpha,\beta)$ be the prefix/tree family defined in \eqref{eq:prefix_family_main}. Assume that \(\alpha\neq 0,\) that $\cF_n(\alpha,\beta)$ is well defined as a family of density operators, and that there exists a constant $\delta\in(0,1)$ such that \(\sum_{k=1}^{n-1} |\beta_k| \le 1-\delta.\) Then any non-adaptive protocol under the same single-copy product-Pauli measurement model that is $(c_0 |\alpha|,\eta)$-accurate over $\cF_n(\alpha,\beta)$ must use at least
\[
M_{\mathrm{nonad}}
\ge
c_1 \,
\frac{3^{n-1}}{\mathrm{kl}(\alpha)}
\log\!\Big(\frac{1}{\eta}\Big)
\]
copies, where
\[
\mathrm{kl}(\alpha)
:=
\frac{1+\alpha}{2}\log(1+\alpha)
+
\frac{1-\alpha}{2}\log(1-\alpha).
\]
In particular, if $|\alpha|$ is sufficiently small, then \(\mathrm{kl}(\alpha)=\Theta(\alpha^2),\) and hence
\[
M_{\mathrm{nonad}}
=
\Omega\!\Big(\frac{3^n}{\alpha^2}\log(1/\eta)\Big).
\]

Equivalently, for every non-adaptive protocol using fewer than the above number of copies, there exists a member of the family $\cF_n(\alpha,\beta)$ such that \(\mathbb{P}_{\rho}\!\left(
\frac12\|\widehat\rho-\rho\|_1 > c_0|\alpha|
\right)\ge \eta.\)
\end{theorem}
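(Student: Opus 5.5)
The proof is a two-point (Le Cam) argument, with the two hypotheses chosen according to the fixed design. Take two hidden strings $b=(u,x)$ and $b'=(u,y)$ that share a length-$(n-1)$ prefix $u\in\{X,Y,Z\}^{n-1}$ and differ in the last symbol, $x\neq y$. All prefix terms $\beta_k P^{(k)}$ with $k\le n-1$ coincide, so
\[
\rho_b-\rho_{b'}=\frac{\alpha}{D}\Big(\bigotimes_{i<n}\sigma_{u_i}\Big)\otimes(\sigma_x-\sigma_y).
\]
Since $\|\sigma_x-\sigma_y\|_1=2\sqrt2$, the trace distance between the two states is $|\alpha|/\sqrt2$. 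Choose $c_0<1/(2\sqrt2)$. Then any estimator that is $(c_0|\alpha|,\eta)$-accurate yields a test between $\rho_b$ and $\rho_{b'}$: decide according to which state $\widehat\rho$ is closer to. Both error probabilities of this test are at most $\eta$.

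\textbf{Step 1: only deep-prefix bases carry information.} Fix a basis $c\in\cB$. If $c$ does not have prefix $u$, or if its last symbol lies outside $\{x,y\}$, then every term in which $\rho_b$ and $\rho_{b'}$ differ has a traceless single-qubit factor misaligned with the measured eigenbasis. The outcome laws $p(\cdot\mid c,\rho_b)$ and $p(\cdot\mid c,\rho_{b'})$ therefore coincide. Now suppose $c=(u,x)$ (the case $c=(u,y)$ is symmetric). Writing $s_k(o)=\pm1$ for the parity of the first $k$ outcome bits, we have
\[
p(o)=\frac{1}{D}\Big(1+\sum_{k<n}\beta_k s_k(o)+\alpha s_n(o)\Big)
\quad\text{and}\quad
q(o)=\frac{1}{D}\Big(1+\sum_{k<n}\beta_k s_k(o)\Big).
\]
The hypothesis $\sum_k|\beta_k|\le 1-\delta$ gives $q(o)\ge \delta/D$. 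Hence the per-shot divergence satisfies
\[
\KL(p\|q)\le\chi^2(p\|q)\le\frac{\alpha^2}{\delta}\le\frac{2}{\delta}\,\mathrm{kl}(\alpha),
\]
where the last inequality is Pinsker's inequality applied to $\mathrm{kl}(\alpha)$, giving $\mathrm{kl}(\alpha)\ge\alpha^2/2$. The reverse direction is handled identically.

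\textbf{Step 2: rare-prefix selection.} For a possibly randomized non-adaptive schedule, let $N_u$ be the number of shots whose basis has prefix $u$. Since $\sum_u \bbE N_u=M$ over the $3^{n-1}$ prefixes, pigeonhole yields some $u$ with $\bbE N_u\le M/3^{n-1}$; we take $u$ to be such a prefix. The schedule randomness is independent of the outcomes, and shots are independent given the schedule. By the chain rule for KL, the divergence between the two full transcript laws is then at most
\[
\bbE N_u\cdot \frac{2}{\delta}\,\mathrm{kl}(\alpha)\le \frac{2M}{\delta\,3^{n-1}}\,\mathrm{kl}(\alpha).
\]
The point is that this choice of $u$ is made after the design is fixed, which is legitimate precisely because the design cannot react to the data.

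\textbf{Step 3: converting KL to sample complexity.} Suppose a test has both error probabilities at most $\eta\le1/4$. By data processing, the transcript KL is at least
\[
\mathrm{kl}_{\mathrm{Bern}}(\eta\,\|\,1-\eta)=(1-2\eta)\log\frac{1-\eta}{\eta}\ge \tfrac12\log\frac{1}{\eta}\cdot c'
\]
for an absolute constant $c'$, uniformly on $(0,1/4]$. This handles the endpoint $\eta=1/4$, where Bretagnolle--Huber alone would degenerate. Combining with Step 2 gives
\[
M\ge c_1\,\frac{3^{n-1}}{\mathrm{kl}(\alpha)}\log\frac1\eta,
\]
with $c_1$ proportional to $\delta$ (a fixed constant by assumption). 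For small $|\alpha|$ we have $\mathrm{kl}(\alpha)=\Theta(\alpha^2)$, which gives the stated $\Omega(3^n\alpha^{-2}\log(1/\eta))$ form. The main obstacle is Step 1: verifying carefully that every non-matching basis, including those that match $u$ but end in the third Pauli symbol, produces identical laws. The other delicate point is bounding the per-shot KL in terms of $\mathrm{kl}(\alpha)$ despite the presence of the $\beta_k$ terms. That is where the slack $\delta$ enters, through the lower bound on $q$.
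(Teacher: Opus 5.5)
Your route is the paper's: pigeonhole on expected cylinder counts, the last-symbol hard pair at trace distance $|\alpha|/\sqrt2$, identical one-shot laws off $\{(u,x),(u,y)\}$, a per-shot bound from the floor $q(o)\ge\delta/D$, the KL chain rule, and Bernoulli data processing. Two of your steps are cleaner than the paper's. Pinsker gives $\alpha^2\le 2\,\mathrm{kl}(\alpha)$ for all $|\alpha|\le 1$, whereas the paper only uses the asymptotic $\mathrm{kl}(\alpha)=\Theta(\alpha^2)$. Monotonicity of the Bernoulli divergence then reaches $(1-2\eta)\log\frac{1-\eta}{\eta}$ directly.

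The step that fails as written is ``the case $c=(u,y)$ is symmetric'' together with ``the reverse direction is handled identically.''

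\begin{itemize}
\item In the chain rule for $\KL(P_b\|P_{b'})$, a shot at $c=(u,y)$ contributes $\KL(q_y\|p_y)$. Here $q_y$ (the law under $\rho_b$) has no $\alpha$-term, and $p_y$ (the law under $\rho_{b'}$) does.
\item So the perturbed law now sits in the second slot. The only floor available is $p_y(o)\ge(1-\sum_k|\beta_k|-|\alpha|)/D$, not $\delta/D$.
\item Swapping $b\leftrightarrow b'$ does not help, because it also swaps the arguments of the divergence. A single chain rule cannot bound the $(u,x)$-shots in one direction and the $(u,y)$-shots in the other.
\item The hypotheses allow $|\alpha|=1-\sum_k|\beta_k|$, for example $\beta=0$ and $\alpha=1$ with any $\delta$. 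Then some outcome has $p_y(o)=0<q_y(o)$.
\item Consequently both orderings of the transcript KL are $+\infty$ whenever the design puts at least one shot on each of $(u,x)$ and $(u,y)$.
\end{itemize}

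The theorem is not false there; the KL bookkeeping is what breaks. The paper avoids this only through its unstated assumption that $|\alpha|$ is small enough that $|r(o)|\le 1/2$. You can repair your argument in either of two ways.

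\begin{itemize}
\item \textbf{Restrict $\alpha$.} Impose $|\alpha|\le\delta/2$. This gives $p_y\ge\delta/(2D)$ and a per-shot bound of $2\alpha^2/\delta$, and the rest of your argument goes through.
\item \textbf{Cover the full stated range.} Replace KL by squared Hellinger distance. The bound $H^2(p,q)\le\chi^2(p\|q)\le\alpha^2/\delta$ needs only the floor on $q$, and $H^2$ is symmetric. The Bhattacharyya coefficient tensorizes; use Jensen for randomized schedules. Any test with both errors at most $\eta$ forces $\mathrm{BC}(P,Q)\le 2\sqrt{\eta(1-\eta)}$, which still yields a lower bound $\gtrsim\log(1/\eta)$ uniformly on $(0,1/4]$.
\end{itemize}

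Your constant $c_1$ depends on $\delta$. The paper's does too, even though the statement calls it universal.
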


The following corollary instantiates the two general bounds with a concrete coefficient choice, making the resulting polynomial-versus-exponential separation explicit in terms of $n$, $\varepsilon$, and $\eta$.

\begin{corollary}[Concrete polynomial-versus-exponential separation]
\label{cor:adaptive_nonadaptive_separation}
Fix a target accuracy scale $\varepsilon\in(0,1)$ and choose
\[
\alpha = \frac{\varepsilon}{4},
\qquad
\beta_k = \frac{\varepsilon}{4(n-1)},
\qquad
k=1,\dots,n-1.
\]
Then the family $\cF_n(\alpha,\beta)$ is physical and lies at trace-distance scale $\Theta(\varepsilon)$ from the maximally mixed state. Moreover, there exist universal constants $C,c>0$ such that:

\begin{enumerate}[label=(\roman*)]
\item there is an adaptive protocol with
\[
M_{\mathrm{ad}}
\le
C\, n^3 \varepsilon^{-2}\log\!\Big(\frac{n}{\eta}\Big)
=
\widetilde O(n^3 \varepsilon^{-2})
\]
copies that succeeds for every state in the family with probability at least $1-\eta$; while

\item every non-adaptive protocol that is $(c\varepsilon,\eta)$-accurate for every state in the same family must satisfy
\begin{equation*}
\begin{split}
M_{\mathrm{nonad}}
&\ge
c\, 3^{\,n-1}\varepsilon^{-2}
\log\!\Big(\frac{1}{\eta}\Big) \\
&=
\Omega\;\!\big(3^n \varepsilon^{-2}\log(1/\eta)\big).
\end{split}
\end{equation*}
\end{enumerate}
\end{corollary}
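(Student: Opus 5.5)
The plan is to instantiate Theorems~\ref{thm:adaptive_prefix} and~\ref{thm:nonad_lb_full} with the stated coefficients. Two preliminary checks come first: physicality, and the distance of the family from the maximally mixed state.

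For physicality, $\|\sum_{k=1}^{n-1}\beta_k P^{(k)}_{b^\star}+\alpha P^{(n)}_{b^\star}\|_{\infty}\le \sum_k|\beta_k|+|\alpha| = \varepsilon/4+\varepsilon/4=\varepsilon/2<1$, because each Pauli string has operator norm one. Hence $D\rho_{b^\star}\succeq (1-\varepsilon/2)\id\succ 0$. Trace one is immediate, since every $P^{(k)}_{b^\star}$ with $k\ge1$ is traceless.

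For the distance scale, the upper bound is
\[
\tfrac12\|\rho_{b^\star}-\id/D\|_1\le \tfrac12\cdot\tfrac1D\cdot D\cdot\varepsilon/2=\varepsilon/4 .
\]
For the lower bound, take the Pauli expansion and use the dual characterization $\|A\|_1\ge |\Tr(AP)|$ with $P=P^{(n)}_{b^\star}$. Orthogonality of distinct Pauli strings under the Hilbert--Schmidt inner product gives $\Tr\big((\rho_{b^\star}-\id/D)P\big)=\alpha$, so the trace distance is at least $\alpha/2=\varepsilon/8$. This shows the scale is $\Theta(\varepsilon)$.

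Part (i) follows from Theorem~\ref{thm:adaptive_prefix}. All $\beta_k$ and $\alpha$ are nonzero, and we have
\[
\sum_{k=1}^{n-1}\beta_k^{-2}+\alpha^{-2}=(n-1)\cdot\frac{16(n-1)^2}{\varepsilon^2}+\frac{16}{\varepsilon^2}\le \frac{16n^3}{\varepsilon^2}.
\]
Multiplying by $24\log(6n/\eta)$ and absorbing $\log 6n/\eta\le C'\log(n/\eta)$ for $n\ge2$ (with the $n=1$ case handled trivially, or by the constant) gives the bound $C n^3\varepsilon^{-2}\log(n/\eta)$. Exact identification of $b^\star$ then implies $(\varepsilon',\eta)$-accuracy for every $\varepsilon'$.

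Part (ii) follows from Theorem~\ref{thm:nonad_lb_full}. Its hypotheses hold: $\alpha\ne0$, the family is physical, and $\sum_k|\beta_k|=\varepsilon/4\le 1/4$, so $\delta=3/4$ works uniformly. The theorem's accuracy level is $c_0|\alpha|=c_0\varepsilon/4$. We choose $c\le c_0/4$, so a $(c\varepsilon,\eta)$-accurate protocol is also $(c_0\varepsilon/4,\eta)$-accurate and the lower bound applies.

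It then remains to bound $\mathrm{kl}(\alpha)$ by $O(\alpha^2)$ for $\alpha\le1/4$. Using $\log(1+x)\le x$ gives
\[
\mathrm{kl}(\alpha)\le \frac{1+\alpha}{2}\alpha-\frac{1-\alpha}{2}\alpha=\alpha^2=\varepsilon^2/16 .
\]
This yields $M_{\mathrm{nonad}}\ge 16c_1 3^{n-1}\varepsilon^{-2}\log(1/\eta)$. Finally, take $c=\min(c_0/4,16c_1)$ so that a single universal constant serves both roles.

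The only mildly delicate point is the restriction $\eta\le1/4$ in Theorem~\ref{thm:nonad_lb_full}, which the corollary implicitly inherits; we state (ii) under that restriction. Beyond that, the main care needed is the constant bookkeeping that lets one constant $c$ serve both as the accuracy threshold and as the prefactor. Everything else is direct substitution.
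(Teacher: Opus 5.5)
Your proposal is correct and follows the same route as the paper. You substitute the coefficients into Theorem~\ref{thm:adaptive_prefix}, where $\sum_k\beta_k^{-2}+\alpha^{-2}=16((n-1)^3+1)/\varepsilon^2$, and into Theorem~\ref{thm:nonad_lb_full} with $\delta=3/4$; your additions (the explicit check $\varepsilon/8\le\frac12\|\rho_{b^\star}-I/D\|_1\le\varepsilon/4$, the global bound $\mathrm{kl}(\alpha)\le\alpha^2$ in place of the paper's small-$\varepsilon$ asymptotics, the choice $c=\min(c_0/4,16c_1)$, and flagging the inherited restriction $\eta\le 1/4$) tighten bookkeeping the paper leaves implicit rather than change the argument.
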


\paragraph{Informal summary.}
Theorems~\ref{thm:adaptive_prefix} and~\ref{thm:nonad_lb_full} together yield a sharp separation inside the same measurement architecture and over the same structured family. The adaptive protocol succeeds because the family reveals information hierarchically: partial prefix matches already produce detectable biases, allowing the adaptive protocol to recover the hidden string progressively, one prefix level at a time. By contrast, any non-adaptive design must commit its entire budget in advance, and the lower bound shows that every such fixed allocation leaves some deep prefix class exponentially under-sampled in the worst case. This is the source of the polynomial-versus-exponential gap.

\section{Prefix/Tree Family and Intuition}
\label{sec:prefix_family}

We now look more closely at the prefix/tree family and explain the ``breadcrumb'' mechanism that makes the adaptive upper bound possible while still permitting a worst-case non-adaptive lower bound.

\subsection{A simple physicality condition}
\label{subsec:prefix_family_definition}

The following elementary lemma records a convenient sufficient condition under which every member of $\cF_n(\alpha,\beta)$ is a valid density matrix.

\begin{lemma}[Sufficient physicality condition]
\label{lem:prefix_family_physicality}
If
\begin{equation}
|\alpha| + \sum_{k=1}^{n-1} |\beta_k| \le 1,
\label{eq:absum}
\end{equation}
then every matrix $\rho_{b^\star}$ defined by equation~\eqref{eq:prefix_family_main} is positive semidefinite and has unit trace.
\end{lemma}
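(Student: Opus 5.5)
Unit trace and positivity are handled separately, and each reduces to a standard property of Pauli strings.

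For unit trace, we use that every prefix operator $P_{b^\star}^{(k)}$ with $k\ge 1$ is a tensor product in which at least one factor is a traceless Pauli matrix $\sigma_{b^\star_1}$. Hence $\Tr P_{b^\star}^{(k)} = \prod_i \Tr(\text{factor}_i) = 0$. Since $\Tr \id = D$, linearity of the trace gives $\Tr\rho_{b^\star} = D/D = 1$. Hermiticity follows at once because each $P_{b^\star}^{(k)}$ is a tensor product of Hermitian matrices and the coefficients $\alpha,\beta_k$ are real.

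For positivity, we bound the operator norm of the perturbation. Each $P_{b^\star}^{(k)}$ is a tensor product of unitary Hermitian involutions. Thus $(P_{b^\star}^{(k)})^2=\id$, its eigenvalues lie in $\{\pm1\}$, and $\norm{P_{b^\star}^{(k)}}_{\infty}=1$. Set $H:=\sum_{k=1}^{n-1}\beta_k P_{b^\star}^{(k)}+\alpha P_{b^\star}^{(n)}$. The triangle inequality for the operator norm together with \eqref{eq:absum} gives
\[
\norm{H}_\infty \le \sum_{k=1}^{n-1}|\beta_k| + |\alpha| \le 1.
\]
So $H \succeq -\id$, hence $\id + H \succeq 0$, and dividing by $D>0$ shows $\rho_{b^\star}\succeq 0$.

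No step here is a genuine obstacle. The only point that needs care is the operator-norm bound, and specifically that it does not require the $P^{(k)}$ to commute. They do in fact commute, since on each site the factors are either equal or one is the identity, so one could simultaneously diagonalize them as a sanity check, but the triangle inequality alone suffices. The bound holds uniformly in $b^\star$, which gives the claim for the whole family $\cF_n(\alpha,\beta)$.
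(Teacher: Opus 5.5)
Your proof is correct. The trace computation matches the paper's, but for positivity you take a somewhat different route. The paper uses the pairwise commutation of the prefix strings to diagonalize them simultaneously. It then writes every eigenvalue of $X_{b^\star}=\sum_{k=1}^{n-1}\beta_k P_{b^\star}^{(k)}+\alpha P_{b^\star}^{(n)}$ as $\sum_{k=1}^{n-1}\beta_k s_k(v)+\alpha s_n(v)$ with $s_k(v)\in\{\pm1\}$, and bounds its modulus. You instead bound $\norm{X_{b^\star}}_\infty$ directly by the triangle inequality.

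Your version is shorter and more general. It uses only that each term is a Hermitian unitary, so it would go through for arbitrary, even non-commuting, Pauli strings.

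The paper's diagonalization exposes the exact spectrum. In the product eigenbasis of $b^\star$ the signs are $s_k=S_{k,b^\star}(o)=\prod_{i\le k}\lambda_i(o_i)$ as in \eqref{eq:Skb}, and $(\lambda_1,\dots,\lambda_n)\mapsto(s_1,\dots,s_n)$ is a bijection of $\{\pm1\}^n$. So every sign pattern occurs, and the smallest eigenvalue of $\id+X_{b^\star}$ equals $1-|\alpha|-\sum_{k=1}^{n-1}|\beta_k|$. Hence \eqref{eq:absum} is in fact also necessary. The paper does not state this, but its route makes it immediate, whereas a norm bound alone cannot detect it. The same sign-pattern description of outcome probabilities is also what the paper reuses to verify $p(o\mid b,\rho_{\mathrm{pref}})\ge 2^{-n}\bigl(1-\sum_{k}|\beta_k|\bigr)$ in the proof of Theorem~\ref{thm:nonad_lb_full}.
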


\begin{proof}
Fix $b^\star$ and define
\[
\begin{aligned}
X_{b^\star}
&:=
\sum_{k=1}^{n-1}\beta_k P_{b^\star}^{(k)} + \alpha P_{b^\star}^{(n)},
\\
&\qquad \text{so that} \qquad
\rho_{b^\star}
=
\frac{1}{D}\bigl(\id + X_{b^\star}\bigr).
\end{aligned}
\]
Each operator $P_{b^\star}^{(k)}$ is a Hermitian Pauli string, and the family \(\left\{P_{b^\star}^{(k)}\right\}_{k=1}^n\) is pairwise commuting. Indeed, for any $j < k$,
\[
\begin{aligned}
P_{b^\star}^{(j)}
&=
\left(\bigotimes_{i=1}^{j}\sigma_{b^\star_i}\right)
\otimes \id^{\otimes(n-j)},
\\
P_{b^\star}^{(k)}
&=
\left(\bigotimes_{i=1}^{k}\sigma_{b^\star_i}\right)
\otimes \id^{\otimes(n-k)}.
\end{aligned}
\]
so the two strings differ only in the positions $j+1,\dots,k$, where one of them carries identity factors. Hence
$P_{b^\star}^{(j)} P_{b^\star}^{(k)}
= P_{b^\star}^{(k)} P_{b^\star}^{(j)}$. Since the prefix operators are Hermitian and commute pairwise, they admit a common orthonormal eigenbasis. Equivalently, there exists a single basis in which all matrices \(\{P_{b^\star}^{(k)}\}_{k=1}^n\) are diagonal simultaneously. Let $v$ be a common eigenvector. Since each $P_{b^\star}^{(k)}$ is Hermitian unitary, its eigenvalues are in $\{+1,-1\}$. Thus there exist signs $s_k(v) \in \{+1,-1\}$ for $k=1,\dots,n$ such that \(P_{b^\star}^{(k)} v = s_k(v)\, v.\) Applying $X_{b^\star}$ to $v$ gives
\[
X_{b^\star} v
=
\left(
\sum_{k=1}^{n-1}\beta_k s_k(v) + \alpha s_n(v)
\right) v.
\]
Hence every eigenvalue $\lambda_v$ of $X_{b^\star}$ has the form
\[
\lambda_v
=
\sum_{k=1}^{n-1}\beta_k s_k(v) + \alpha s_n(v),
\]
and therefore satisfies
\[
|\lambda_v|
\le
\sum_{k=1}^{n-1} |\beta_k| + |\alpha|.
\]
Under the assumption~\eqref{eq:absum}, we obtain \(\lambda_v \ge -1\) for every common eigenvector $v$. It follows that every eigenvalue of $\id + X_{b^\star}$ is nonnegative, because $( \id + X_{b^\star} ) v = (1+\lambda_v) v$ and $1+\lambda_v \ge 0$. Thus \(\id + X_{b^\star} \succeq 0,\) and consequently \(\rho_{b^\star} \succeq 0.\)

Finally, every non-identity Pauli string has trace zero, while \(\Tr(\id) = D.\) Since each $P_{b^\star}^{(k)}$ contains at least one non-identity Pauli factor, it follows that $\Tr\!\bigl(P_{b^\star}^{(k)}\bigr) = 0$ for $k=1,\dots,n$. Therefore
\[
\Tr(\rho_{b^\star})
=
\frac{1}{D}
\Tr\!\bigl(\id + X_{b^\star}\bigr)
=
\frac{1}{D}\Tr(\id)
=
1.
\]
Thus $\rho_{b^\star}$ is positive semidefinite with unit trace, and hence is a valid density matrix.
\end{proof}

\paragraph{State-space geometry within the family.}
The physicality condition above ensures that the construction indeed defines a valid family of density matrices. It is also useful to understand how two members of the family separate in trace distance when their hidden strings differ. If two strings agree on their first several entries, then the corresponding states contain exactly the same prefix-Pauli terms up to that depth. Only at the first location where the strings differ do the defining terms of the two states begin to diverge. The next lemma makes this observation precise.

\begin{lemma}[Trace-distance geometry of the prefix/tree family]
\label{lem:prefix_family_trace_geometry}
Let $b,\widetilde b \in \{X,Y,Z\}^n$ be two distinct hidden strings, and let
\[
r := \min \{\, i \in \{1,\dots,n\} : b_i \neq \widetilde b_i \,\}
\]
be the first location at which they differ. Then
\begin{equation}
\rho_b - \rho_{\widetilde b}
=
\frac{1}{D}
\left(
\sum_{k=r}^{n-1} \beta_k \bigl(P_b^{(k)} - P_{\widetilde b}^{(k)}\bigr)
+
\alpha \bigl(P_b^{(n)} - P_{\widetilde b}^{(n)}\bigr)
\right).
\label{eq:family_difference_first_mismatch}
\end{equation}
In follows that
\begin{equation}
\frac{1}{2}\,\|\rho_b - \rho_{\widetilde b}\|_1
\;\le\;
\sum_{k=r}^{n-1} |\beta_k| + |\alpha|,
\label{eq:family_trace_upper}
\end{equation}
and also
\begin{equation}
\frac{1}{2}\,\|\rho_b - \rho_{\widetilde b}\|_1
\;\ge\;
\frac{|c_r|}{2},
\quad
c_r :=
\begin{cases}
\beta_r, & 1 \le r \le n-1,\\
\alpha, & r=n.
\end{cases}
\label{eq:family_trace_lower}
\end{equation}
\end{lemma}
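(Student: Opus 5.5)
The proof has three parts: an exact difference identity, a triangle-inequality upper bound, and a lower bound obtained by testing the difference against a single well-chosen Pauli string.

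\textbf{Difference identity.} For $k<r$ the prefixes $(b_1,\dots,b_k)$ and $(\widetilde b_1,\dots,\widetilde b_k)$ coincide by the definition of $r$. Hence $P_b^{(k)} = P_{\widetilde b}^{(k)}$ for every such $k$. Subtracting the two instances of \eqref{eq:prefix_family_main}, the identity terms and all terms with $k<r$ cancel, which leaves exactly \eqref{eq:family_difference_first_mismatch}. When $r=n$ the $\beta$-sum is empty and only the $\alpha$-term survives.

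\textbf{Upper bound \eqref{eq:family_trace_upper}.} Apply the triangle inequality for the trace norm to \eqref{eq:family_difference_first_mismatch}. Each Pauli string $P$ is Hermitian and unitary with eigenvalues $\pm1$, so $\|P\|_1 = D$. Therefore $\|P_b^{(k)} - P_{\widetilde b}^{(k)}\|_1 \le 2D$. Multiplying by $\tfrac{1}{2}\cdot\tfrac{1}{D}$ and summing gives $\sum_{k=r}^{n-1}|\beta_k| + |\alpha|$.

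\textbf{Lower bound \eqref{eq:family_trace_lower}.} I will use the duality bound $\|A\|_1 \ge |\Tr(AQ)|$, valid for any $Q$ with operator norm at most $1$. The test operator is $Q = P_b^{(r)}$.

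The key input is the trace orthogonality of Pauli strings: $\Tr(PP') = D$ if $P = P'$, and $\Tr(PP') = 0$ otherwise. This holds because $PP'$ is, up to a phase, a non-identity Pauli string whenever $P \ne P'$. With this, the terms of $\Tr\bigl((\rho_b-\rho_{\widetilde b})P_b^{(r)}\bigr)$ are handled as follows.
\begin{itemize}
\item For $k>r$, the string $P_b^{(k)}$ carries a non-identity factor at position $k$, while $P_b^{(r)}$ carries the identity there. The two strings differ, so the trace vanishes.
\item For every $k\ge r$, at position $r$ the string $P_{\widetilde b}^{(k)}$ carries $\sigma_{\widetilde b_r}$, while $P_b^{(r)}$ carries $\sigma_{b_r} \neq \sigma_{\widetilde b_r}$. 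The two strings differ, so again the trace vanishes.
\item The only surviving term is $\Tr\bigl(P_b^{(r)}P_b^{(r)}\bigr) = D$, which comes with coefficient $c_r$.
\end{itemize}
Hence $\Tr\bigl((\rho_b-\rho_{\widetilde b})P_b^{(r)}\bigr) = c_r$. This gives $\|\rho_b-\rho_{\widetilde b}\|_1 \ge |c_r|$, and halving yields \eqref{eq:family_trace_lower}.

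The only step needing care is this orthogonality bookkeeping. The subtle point is that position $r$ kills every $\widetilde b$-term, including the deeper ones $k>r$, since all of them share the mismatched symbol $\widetilde b_r$. Once that is checked, the rest is routine.
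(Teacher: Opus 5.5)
Your proposal is correct and follows essentially the same route as the paper's proof: the common-prefix terms cancel, the upper bound comes from the triangle inequality with $\|P\|_1 = D$, and the lower bound comes from Hölder duality against the test operator $P_b^{(r)}$, with Pauli orthogonality isolating the coefficient $c_r$. Your explicit bookkeeping (position $r$ kills every $\widetilde b$-term, and position $k$ kills the deeper $b$-terms) is, if anything, slightly more careful than the paper's treatment of the same step.
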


\begin{proof}
By definition,
\[
\begin{aligned}
\rho_b
&=
\frac{1}{D}
\left(
\id
+
\sum_{k=1}^{n-1}\beta_k P_b^{(k)}
+
\alpha P_b^{(n)}
\right),
\\[4pt]
\rho_{\widetilde b}
&=
\frac{1}{D}
\left(
\id
+
\sum_{k=1}^{n-1}\beta_k P_{\widetilde b}^{(k)}
+
\alpha P_{\widetilde b}^{(n)}
\right).
\end{aligned}
\]
Since $b$ and $\widetilde b$ agree in their first $r-1$ entries, we have $P_b^{(k)} = P_{\widetilde b}^{(k)}$ for $1 \le k \le r-1$. Subtracting the two state formulas therefore gives~\eqref{eq:family_difference_first_mismatch}.

For the upper bound, apply the triangle inequality:
\[
\begin{aligned}
\|\rho_b - \rho_{\widetilde b}\|_1
&\le
\frac{1}{D}
\Bigg(
\sum_{k=r}^{n-1} |\beta_k|\,\|P_b^{(k)} - P_{\widetilde b}^{(k)}\|_1
\\
&\qquad\qquad
+
|\alpha|\,\|P_b^{(n)} - P_{\widetilde b}^{(n)}\|_1
\Bigg).
\end{aligned}
\]
Each prefix Pauli operator is Hermitian unitary, so \(\|P_b^{(k)}\|_1 = \|P_{\widetilde b}^{(k)}\|_1 = D,\)
and hence
\[
\|P_b^{(k)} - P_{\widetilde b}^{(k)}\|_1
\le
\|P_b^{(k)}\|_1 + \|P_{\widetilde b}^{(k)}\|_1
=
2D.
\]
The same bound holds for the full-depth term. Therefore
\[
\|\rho_b - \rho_{\widetilde b}\|_1
\le
2\left(\sum_{k=r}^{n-1} |\beta_k| + |\alpha|\right),
\]
which yields \eqref{eq:family_trace_upper}.

For the lower bound, let
\[
Q_r :=
\begin{cases}
P_b^{(r)}, & 1 \le r \le n-1,\\
P_b^{(n)}, & r=n.
\end{cases}
\]
Since $Q_r$ is Hermitian unitary, $\|Q_r\|_\infty=1$. Using the standard inequality
\(|\Tr(AX)| \le \|A\|_\infty \|X\|_1\), we obtain \(\bigl|\Tr\bigl(Q_r(\rho_b-\rho_{\widetilde b})\bigr)\bigr|
\le
\|\rho_b-\rho_{\widetilde b}\|_1\). Hence
\[
\frac12 \|\rho_b-\rho_{\widetilde b}\|_1
\ge
\frac12 \bigl|\Tr\bigl(Q_r(\rho_b-\rho_{\widetilde b})\bigr)\bigr|.
\]
Now substitute \eqref{eq:family_difference_first_mismatch} into the trace expression.
For the depth-$r$ term, $Q_r$ coincides with the Pauli operator coming from $\rho_b$, so this contribution is nonzero. The corresponding depth-$r$ term from $\rho_{\widetilde b}$ is a different Pauli string, because $b_r \neq \widetilde b_r$, and is therefore orthogonal to $Q_r$. For every deeper term $k>r$, both $P_b^{(k)}$ and $P_{\widetilde b}^{(k)}$ are also distinct from $Q_r$, so their Hilbert--Schmidt inner products with $Q_r$ vanish as well. Thus only the first differing term contributes, which gives
\[
\Tr\bigl(Q_r(\rho_b-\rho_{\widetilde b})\bigr)=c_r.
\]
where
\[
c_r :=
\begin{cases}
\beta_r, & 1 \le r \le n-1,\\
\alpha, & r=n.
\end{cases}
\]
Therefore
\[
\frac12 \|\rho_b - \rho_{\widetilde b}\|_1
\ge
\frac{|c_r|}{2},
\]
which proves \eqref{eq:family_trace_lower}.
\end{proof}

An immediate consequence of the upper bound is that if an estimate $\widehat b$ agrees with the true hidden string $b^\star$ on its first $s$ coordinates, then 
\[
\frac{1}{2}\,\|\rho_{\widehat b}-\rho_{b^\star}\|_1
\le
\sum_{k=s+1}^{n-1} |\beta_k| + |\alpha|.
\]
Thus partial prefix recovery already implies a quantitative trace-distance guarantee.

\subsection{Why the family is hierarchical}
\label{subsec:prefix_family_hierarchy}

The defining feature of equation~\eqref{eq:prefix_family_main} is that information is distributed across all prefix depths rather than being concentrated only at the full length-$n$ string. This creates a coarse-to-fine structure over the tree of Pauli basis strings, where each node has three children corresponding to appending $X$, $Y$, or $Z$.

To make this precise, let \(b = (b_1,\dots,b_n)\in\cB\) be any allowed measurement setting, and for each depth \(k\in\{1,\dots,n\}\) define the corresponding prefix observable
\[
P_b^{(k)}
=
\left(\bigotimes_{i=1}^{k} \sigma_{b_i}\right)\otimes \id^{\otimes(n-k)}.
\]
One shot in the product-Pauli basis \(b\) produces an outcome \(o=(o_1,\dots,o_n)\in\{0,1\}^n.\) From such shots, however, one can actually compute sample values for all prefix observables \(P_b^{(k)}\).

To show this, for each site \(i\), let \(\lambda_i(o_i)\in\{\pm1\}\) denote the eigenvalue of \(\sigma_{b_i}\) associated with the local outcome \(o_i\). For each \(k\), define the scalar statistic
\begin{equation}
S_{k,b}(o):=\prod_{i=1}^k \lambda_i(o_i).
\label{eq:Skb}
\end{equation}
Then \(S_{k,b}(o)\) is exactly the eigenvalue of the prefix observable \(P_b^{(k)}\) on the product-basis outcome \(o\). Equivalently,
\[
P_b^{(k)}\Pi_o^{(b)} = S_{k,b}(o)\,\Pi_o^{(b)}.
\]
Using this identity together with \(\sum_o \Pi_o^{(b)} = \id,\)
we obtain
\begin{align*}
\mathbb{E}_\rho\!\big[S_{k,b}(O)\big]
&=
\sum_o S_{k,b}(o)\,p(o\mid b,\rho)
\\
&=
\sum_o S_{k,b}(o)\,\Tr\!\big(\rho\,\Pi_o^{(b)}\big)
\\
&=
\sum_o \Tr\!\big(\rho\,P_b^{(k)}\Pi_o^{(b)}\big)
\\
&=
\Tr\!\big(\rho\,P_b^{(k)}\sum_o \Pi_o^{(b)}\big)
\\
&=
\Tr\!\big(P_b^{(k)}\,\rho\big).
\end{align*}
Thus, for each \(k\), the expectation of the prefix observable \(P_b^{(k)}\) is estimated from outcomes obtained by measuring in the allowed basis \(b\). A single measurement setting \(b\) therefore yields sample information for all prefix observables \(P_b^{(k)}\), \(k=1,\dots,n\). The following lemma confirms that estimating these prefix observables isolates the exact structural information needed to recover the hidden string $b^\star$.

\begin{lemma}[Prefix-selective expectation pattern]
\label{lem:prefix_selective_expectation}
Fix a hidden string $b^\star \in \cB$ and define $\rho_{b^\star}$ as in~\eqref{eq:prefix_family_main}. Then for any measurement setting $b \in \cB$ and any depth $k \in \{1,\dots,n-1\}$,
\[
\Tr\!\bigl(P_b^{(k)} \rho_{b^\star}\bigr)
=
\begin{cases}
\beta_k, & \text{if } (b_1,\dots,b_k) = (b^\star_1,\dots,b^\star_k),\\
0, & \text{otherwise.}
\end{cases}
\]
At the final depth,
\[
\Tr\!\bigl(P_b^{(n)} \rho_{b^\star}\bigr)
=
\begin{cases}
\alpha, & \text{if } b = b^\star,\\
0, & \text{otherwise.}
\end{cases}
\]
\end{lemma}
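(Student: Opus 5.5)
The plan is to expand $\Tr(P_b^{(k)}\rho_{b^\star})$ by linearity using the definition~\eqref{eq:prefix_family_main}, and then evaluate each resulting term with the Hilbert--Schmidt orthogonality of Pauli strings. For Pauli strings $P,Q$ on $n$ qubits, $\Tr(PQ)=D$ if $P=Q$ and $\Tr(PQ)=0$ otherwise. This follows because $PQ$ factorizes site by site into products $\sigma_a\sigma_c$ with $a,c\in\{I,X,Y,Z\}$. Such a product is proportional to a non-identity Pauli whenever $a\neq c$, and it equals $I$ when $a=c$. The trace of a tensor product is the product of the local traces, and every non-identity single-qubit Pauli is traceless. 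This is the same fact already used in the proofs of Lemma~\ref{lem:prefix_family_physicality} and Lemma~\ref{lem:prefix_family_trace_geometry}.

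Applying linearity, we get
\[
\Tr\bigl(P_b^{(k)}\rho_{b^\star}\bigr)=\frac1D\Bigl(\Tr(P_b^{(k)})+\sum_{j=1}^{n-1}\beta_j\Tr\bigl(P_b^{(k)}P_{b^\star}^{(j)}\bigr)+\alpha\Tr\bigl(P_b^{(k)}P_{b^\star}^{(n)}\bigr)\Bigr).
\]
The first term vanishes because $P_b^{(k)}$ has at least one non-identity factor. Next, determine exactly when $P_b^{(k)}=P_{b^\star}^{(j)}$ as Pauli strings. The support of $P_b^{(k)}$, meaning the set of non-identity sites, is $\{1,\dots,k\}$. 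The support of $P_{b^\star}^{(j)}$ is $\{1,\dots,j\}$. Equal strings have equal supports, so equality forces $j=k$. Given $j=k$, equality holds exactly when $b_i=b_i^\star$ for all $i\le k$.

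It follows that at most one term survives in the sum. That term is $j=k$, and it survives precisely when the prefixes agree. It contributes $\beta_k D/D=\beta_k$ when $k\le n-1$. When $k=n$, the surviving term is the $\alpha$-term, and the prefix condition becomes the full condition $b=b^\star$, giving $\alpha$. In every other case, every summand vanishes and the trace is $0$.

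The only point requiring care is the orthogonality claim, since cross terms such as $\sigma_X\sigma_Y=i\sigma_Z$ carry phases. These phases are harmless, because each such factor is still traceless, so the whole product trace is zero. I therefore do not expect a genuine obstacle here. The lemma is a direct bookkeeping consequence of support comparison together with Pauli orthogonality.
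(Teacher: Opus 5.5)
Your proposal is correct and follows essentially the same route as the paper's proof. You expand by linearity, use Pauli orthogonality to kill $\Tr(P_b^{(k)})$ and every cross term with mismatched support (including the $\alpha$-term when $k<n$), and keep only the $j=k$ term, which equals $D$ exactly when the length-$k$ prefixes of $b$ and $b^\star$ agree; your explicit support comparison and remark on the harmless phases just spell out what the paper states more briefly.
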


\begin{proof}
We prove the claim by expanding the trace and using Pauli orthogonality. Fix $b \in \cB$ and $k \in \{1,\dots,n-1\}$. Write $P := P_b^{(k)}$. Then
\[
\begin{aligned}
\Tr(P\rho_{b^\star})
&=
\frac{1}{D}
\Bigl(
\Tr(P)
+
\sum_{j=1}^{n-1}\beta_j \Tr\!\bigl(P P_{b^\star}^{(j)}\bigr)
\Bigr)
\\
&\qquad
+
\frac{\alpha}{D}\Tr\!\bigl(P P_{b^\star}^{(n)}\bigr).
\end{aligned}
\]
First, \(\Tr(P)=0,\) because $P_b^{(k)}$ is a non-identity Pauli string. Next consider the terms $\Tr\!\bigl(P P_{b^\star}^{(j)}\bigr)$ for
$j=1,\dots,n-1$. If $j \neq k$, then the two strings have different supports or different lengths, so their product is again a non-identity Pauli string. Hence $\Tr\!\bigl(P P_{b^\star}^{(j)}\bigr)=0$. Thus the only potentially nonzero term in the sum is the one with $j=k$. For that term,
\[
\Tr\!\bigl(P_b^{(k)} P_{b^\star}^{(k)}\bigr)
=
\begin{cases}
D, & \text{if } (b_1,\dots,b_k)=(b^\star_1,\dots,b^\star_k),\\
0, & \text{otherwise,}
\end{cases}
\]
because two Pauli strings have Hilbert--Schmidt inner product equal to $D$ when they are identical and $0$ when they are distinct. Since $k<n$, the product \(P_b^{(k)} P_{b^\star}^{(n)}\) is also a non-identity Pauli string, so \(\Tr\!\bigl(P_b^{(k)} P_{b^\star}^{(n)}\bigr)=0.\) Combining these facts gives
\[
\begin{aligned}
\Tr\!\bigl(P_b^{(k)} \rho_{b^\star}\bigr)
&=
\frac{1}{D}
\left(
\beta_k \Tr\!\bigl(P_b^{(k)} P_{b^\star}^{(k)}\bigr)
\right)
\\
&=
\begin{cases}
\beta_k, & \text{if } (b_1,\dots,b_k)=(b^\star_1,\dots,b^\star_k),\\
0, & \text{otherwise.}
\end{cases}
\end{aligned}
\]

Now consider the final depth $k=n$. Then \(P_b^{(n)}\) is a full Pauli string, and similarly
\[
\begin{aligned}
\Tr\!\bigl(P_b^{(n)} \rho_{b^\star}\bigr)
&=
\frac{1}{D}
\Bigl(
\Tr(P_b^{(n)})
+
\sum_{j=1}^{n-1}\beta_j \Tr\!\bigl(P_b^{(n)} P_{b^\star}^{(j)}\bigr)
\Bigr)
\\
&\qquad
+
\frac{\alpha}{D}\Tr\!\bigl(P_b^{(n)} P_{b^\star}^{(n)}\bigr).
\end{aligned}
\]
Again, \(\Tr(P_b^{(n)})=0,\) and for every $j<n$, \(\Tr\!\bigl(P_b^{(n)} P_{b^\star}^{(j)}\bigr)=0\) because the product is a non-identity Pauli string. Therefore only the final term can contribute:
\[
\Tr\!\bigl(P_b^{(n)} P_{b^\star}^{(n)}\bigr)
=
\begin{cases}
D, & \text{if } b=b^\star,\\
0, & \text{otherwise.}
\end{cases}
\]
Hence
\[
\Tr\!\bigl(P_b^{(n)} \rho_{b^\star}\bigr)
=
\begin{cases}
\alpha, & \text{if } b=b^\star,\\
0, & \text{otherwise.}
\end{cases}
\]
This proves the lemma.
\end{proof}

Lemma~\ref{lem:prefix_selective_expectation} is the mathematical form of the breadcrumb intuition. A measurement basis does not need to match the entire hidden string in order to become informative. It is already useful as soon as it matches the correct prefix at some depth. In that case, the expectation $\Tr(P_b^{(k)}\rho_{b^\star})$ is nonzero and equals $\beta_k$. (As we will see in Section~\ref{subsec:prefix_family_spike_limit}, this is the crucial difference between the present family and an all-or-nothing spike construction.)

A useful way to visualize the family is through the tree of basis strings in $\{X,Y,Z\}^n$, where each node has three children corresponding to appending $X$, $Y$, or $Z$. The unknown string $b^\star$ determines a single root-to-leaf path in this tree. The coefficient $\beta_1$ is associated with the first step along that path, $\beta_2$ with the second, and more generally $\beta_k$ with depth $k$, while $\alpha$ is associated with the final depth-$n$ observable. A measurement basis that reaches the correct node at depth $k$ already reveals the corresponding depth-$k$ signal, even if the remaining symbols do not match. In this way, the hidden string can be identified progressively, one level at a time.

\subsection{Why this helps adaptivity}
\label{subsec:prefix_family_adaptivity_intuition}

The adaptive mechanism suggested by the family is now straightforward. Suppose an adaptive protocol has already identified the first $k-1$ symbols of the hidden string. To determine the next symbol, it only needs to compare the three candidate extensions \(a \in \{X,Y,Z\}\).

For each candidate $a$, one measures in a basis whose first $k$ symbols agree with the known prefix followed by $a$. By Lemma~\ref{lem:prefix_selective_expectation}, the corresponding depth-$k$ expectation is equal to $\beta_k$ for the correct candidate and $0$ for the two incorrect candidates. Thus each stage reduces to a constant-size hypothesis test with a nonzero signal gap. Repeating this argument from $k=1$ through $k=n-1$ reveals the hidden prefix sequentially, and, once the first $n-1$ symbols have been identified, the final stage uses the full-length coefficient $\alpha$ to determine the last symbol.

This stagewise viewpoint shows how the global search over $3^n$ possible basis strings can be replaced by a sequence of $n$ local three-way decisions, one at each depth of the tree. This mechanism powers the adaptive upper bound proved in the next section, where we also provide the formal concentration analysis bounding the sample complexity and error probability of each stage.

\subsection{Relation to the spike-like limit}
\label{subsec:prefix_family_spike_limit}

It is also helpful to note what happens when the intermediate coefficients are removed. If $\beta_k = 0$ for all $k = 1,\dots,n-1$, then equation~\eqref{eq:prefix_family_main} reduces to
\[
\rho_{b^\star}
=
\frac{1}{D}
\left(
\id + \alpha P_{b^\star}^{(n)}
\right).
\]
In this limit, only the full basis string is informative. The hierarchical breadcrumb structure disappears, and the family reduces to an all-or-nothing Pauli-tilt construction. This makes clear that the intermediate prefix terms are what enable sequential localization.

We now formalize this mechanism: Section~\ref{sec:adaptive_upper} develops the constructive adaptive recovery procedure, while Section~\ref{sec:nonadaptive_lower} shows how this same structure yields a worst-case obstruction for any non-adaptive allocation.

\section{Adaptive Upper Bound}
\label{sec:adaptive_upper}

We now turn the breadcrumb mechanism from Section~\ref{sec:prefix_family} into a quantitative adaptive recovery guarantee. The adaptive measurement protocol proceeds sequentially. For $k=1,\dots,n$, once the first $k-1$ symbols have been estimated as $\widehat{b}_1, \dots, \widehat{b}_{k-1}$, the $k$th symbol can be recovered by comparing the three possible next Pauli choices. To do this, we construct a measurement setting $b \in \cB$ whose first $k-1$ symbols match the estimated prefix: $(b_1,\dots,b_{k-1}) = (\widehat{b}_1,\dots,\widehat{b}_{k-1})$. For the $k$th position $b_k$, we test each candidate $a\in\{X,Y,Z\}$. When $k < n$, the choice of the remaining symbols $b_{k+1},\dots,b_n$ does not affect the depth-$k$ prefix observable; for concreteness, we may simply pad with $X$, setting $b_{k+1} = \dots = b_n = X$.

For each candidate $a \in \{X,Y,Z\}$, we collect $m_k$ i.i.d.\ samples in the corresponding basis $b$. Let $O^{(t,a)} \in \{0,1\}^n$ denote the outcome of the $t$-th measurement shot in that basis. Using~\eqref{eq:Skb}, we compute the empirical mean of the prefix statistic,
\[
\widehat s_{k,a} = \frac{1}{m_k}\sum_{t=1}^{m_k} S_{k,b}\big(O^{(t,a)}\big),
\]
and select the next symbol by maximizing the absolute signal:
\[
\widehat b_k \in \arg\max_{a\in\{X,Y,Z\}} |\widehat s_{k,a}|.
\]
We now proceed to prove Theorem~\ref{thm:adaptive_prefix}.

\begin{proof}[Proof of Theorem~\ref{thm:adaptive_prefix}]
Fix $k\in\{1,\dots,n\}$, and condition on the event that the prefix \((b^\star_1,\dots,b^\star_{k-1})\) has already been identified correctly. By Lemma~\ref{lem:prefix_selective_expectation},
\[
\begin{aligned}
\bbE[\widehat s_{k,a}]
&=
\begin{cases}
\mu_k, & a=b^\star_k,\\
0, & a\neq b^\star_k,
\end{cases}
\\[6pt]
\mu_k
&:=
\begin{cases}
\beta_k, & k\in\{1,\dots,n-1\},\\
\alpha, & k=n.
\end{cases}
\end{aligned}
\]
Define
\[
\begin{aligned}
\mathcal E_k
:={}&
\Big\{\,|\widehat s_{k,b^\star_k}-\mu_k|
\le \tfrac12|\mu_k|\,\Big\}
\\
&{}\cap
\bigcap_{a\neq b^\star_k}
\Big\{\,|\widehat s_{k,a}|
\le \tfrac12|\mu_k|\,\Big\}.
\end{aligned}
\]
On $\mathcal E_k$, $|\widehat s_{k,b^\star_k}| \ge \tfrac12|\mu_k|$ and $|\widehat s_{k,a}| \le \tfrac12|\mu_k|$ for all $a\neq b^\star_k$, so the argmax rule selects the correct symbol: $\widehat b_k=b^\star_k$.

Since $S_{k,a}(O^{(t,a)})\in\{\pm1\}\subset[-1,1]$, Hoeffding's inequality \cite{hoeffding1963probability} gives
\[
\bbP\!\left(
|\widehat s_{k,a}-\bbE[\widehat s_{k,a}]|
\ge \tfrac12|\mu_k|
\right)
\le
2\exp\!\left(-\frac{m_k\mu_k^2}{8}\right)
\]
for each candidate $a$. A union bound over the three candidates yields
\[
\bbP(\mathcal E_k^c)
\le
6\exp\!\left(-\frac{m_k\mu_k^2}{8}\right).
\]
Therefore, if
\[
m_k \ge \frac{8}{\mu_k^2}\log\!\Big(\frac{6n}{\eta}\Big),
\]
then $\bbP(\mathcal E_k^c)\le \frac{\eta}{n}$.

A union bound over $k=1,\dots,n$ gives $\bbP\big((\widehat b_1,\dots,\widehat b_n)=(b^\star_1,\dots,b^\star_n)\big)\ge 1-\eta$.

Finally, since each stage tests three candidates, $M_{\mathrm{ad}} = 3\sum_{k=1}^{n} m_k$. Using the above bound on $m_k$,
\[
M_{\mathrm{ad}}
\le
24\log\!\Big(\frac{6n}{\eta}\Big)
\left(
\sum_{k=1}^{n-1}\beta_k^{-2}+\alpha^{-2}
\right).
\]
\end{proof}

As an immediate consequence, the same adaptive procedure also yields a trace-distance tomography guarantee over the family.

\begin{corollary}[Adaptive tomography guarantee over the prefix/tree family]
\label{cor:adaptive_prefix_tomography}
Assume the setting of Theorem~\ref{thm:adaptive_prefix}, and let $\widehat b$ denote the hidden-string estimate returned by the adaptive prefix-recovery procedure. Define the state estimator by
\[
\widehat\rho := \rho_{\widehat b}.
\]
Then
\begin{equation}
\sup_{\rho_{b^\star}\in \cF_n(\alpha,\beta)}
\bbP_{\rho_{b^\star}}\!\left[
\frac12 \|\widehat\rho-\rho_{b^\star}\|_1 > 0
\right]
\le \eta.
\label{eq:adaptive_exact_tomo_cor}
\end{equation}
Consequently, for every $\varepsilon>0$, the same adaptive procedure is $(\varepsilon,\eta)$-accurate over $\cF_n(\alpha,\beta)$, i.e.,
\begin{equation}
\sup_{\rho_{b^\star}\in \cF_n(\alpha,\beta)}
\bbP_{\rho_{b^\star}}\!\left[
\frac12 \|\widehat\rho-\rho_{b^\star}\|_1 > \varepsilon
\right]
\le \eta.
\label{eq:adaptive_epsilon_tomo_cor}
\end{equation}
\end{corollary}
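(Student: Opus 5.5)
The plan is to reduce both claims to the index-identification guarantee of Theorem~\ref{thm:adaptive_prefix}. Run the adaptive stagewise procedure from the proof of that theorem, with $m_k = \lceil 8\mu_k^{-2}\log(6n/\eta)\rceil$ copies per candidate at stage $k$. Let $\widehat b=(\widehat b_1,\dots,\widehat b_n)$ be the returned string and set $\widehat\rho:=\rho_{\widehat b}$, which is a valid density matrix because the family is assumed well defined. Fix an arbitrary $b^\star$ and define the success event $G:=\{\widehat b=b^\star\}$. The theorem's proof gives $\bbP_{\rho_{b^\star}}(G^c)\le\eta$. This bound is uniform in $b^\star$, since the Hoeffding and union bounds depend only on the known coefficients $\mu_k$ and not on the hidden string.

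Next I show the event inclusion $\{\tfrac12\|\widehat\rho-\rho_{b^\star}\|_1>0\}\subseteq G^c$. On $G$ we have $\rho_{\widehat b}=\rho_{b^\star}$ identically, so the trace distance is exactly zero. Hence $\bbP_{\rho_{b^\star}}[\tfrac12\|\widehat\rho-\rho_{b^\star}\|_1>0]\le\bbP(G^c)\le\eta$. Taking the supremum over the finitely many $b^\star\in\{X,Y,Z\}^n$ gives \eqref{eq:adaptive_exact_tomo_cor}.

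For \eqref{eq:adaptive_epsilon_tomo_cor}, note that for every $\varepsilon>0$ the event $\{\tfrac12\|\widehat\rho-\rho_{b^\star}\|_1>\varepsilon\}$ is contained in the event that this quantity is $>0$. Monotonicity of probability and the previous bound then give the claim, with the supremum handled as before. The converse inclusion $G^c\subseteq\{\text{distance}>0\}$ is not needed; it would follow anyway from the lower bound \eqref{eq:family_trace_lower} of Lemma~\ref{lem:prefix_family_trace_geometry}, since all $c_r\neq0$.

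No step is a real obstacle. The only point needing care is measurability and uniformity: the failure probability from the proof of Theorem~\ref{thm:adaptive_prefix} must hold for every $b^\star$ simultaneously. It does, because the stage-$k$ analysis conditions on a correct prefix and uses only $|\mu_k|$. I would also make explicit the chaining step $\bbP(G^c)\le\sum_k\bbP(\mathcal E_k^c\cap\{\text{prefix correct}\})\le\eta$.
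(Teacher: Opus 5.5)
Your proposal is correct and follows essentially the same route as the paper: you invoke the identification guarantee of Theorem~\ref{thm:adaptive_prefix}, show that the positive-distance event is contained in $\{\widehat b\neq b^\star\}$, and then pass to the $\varepsilon$-event by monotonicity. Your extra remarks, namely uniformity in $b^\star$, the explicit chaining over stages, and the optional converse via Lemma~\ref{lem:prefix_family_trace_geometry}, are sound refinements that the paper leaves implicit.
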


\begin{proof}
By Theorem~\ref{thm:adaptive_prefix}, with probability at least $1-\eta$, the adaptive procedure returns the correct hidden string, that is, $\widehat b = b^\star$. In this event, the estimator satisfies $\widehat\rho = \rho_{\widehat b} = \rho_{b^\star}$, and therefore $\frac12 \|\widehat\rho-\rho_{b^\star}\|_1 = 0$. Hence the event $\left\{
\frac12 \|\widehat\rho-\rho_{b^\star}\|_1 > 0
\right\}$ is contained in the event $\{\widehat b \neq b^\star\}$. The bound
\eqref{eq:adaptive_exact_tomo_cor} therefore follows directly from the recovery guarantee in Theorem~\ref{thm:adaptive_prefix}. Since, for every $\varepsilon>0$,
\[
\left\{
\frac12 \|\widehat\rho-\rho_{b^\star}\|_1 > \varepsilon
\right\}
\subseteq
\left\{
\frac12 \|\widehat\rho-\rho_{b^\star}\|_1 > 0
\right\},
\]
the bound \eqref{eq:adaptive_epsilon_tomo_cor} follows immediately as well.
\end{proof}

We now return to the copy-complexity bound and instantiate it with the concrete coefficient choice from Corollary~\ref{cor:adaptive_nonadaptive_separation}.

\begin{proof}[Proof of Corollary~\ref{cor:adaptive_nonadaptive_separation} statement $(i)$]
Under the choice
\[
\alpha=\frac{\varepsilon}{4},
\qquad
\beta_k=\frac{\varepsilon}{4(n-1)},
\qquad k=1,\dots,n-1,
\]
the theorem gives
\[
\begin{aligned}
m_k
&\ge
\frac{8}{\beta_k^2}\log\!\Big(\frac{6n}{\eta}\Big)
\\
&=
\frac{128(n-1)^2}{\varepsilon^2}\log\!\Big(\frac{6n}{\eta}\Big),
\qquad
k=1,\dots,n-1.
\end{aligned}
\]
and
\[
m_n
\ge
\frac{8}{\alpha^2}\log\!\Big(\frac{6n}{\eta}\Big)
=
\frac{128}{\varepsilon^2}\log\!\Big(\frac{6n}{\eta}\Big).
\]
Hence
\[
\begin{aligned}
M_{\mathrm{ad}}
&=
3\sum_{k=1}^{n}m_k
\\
&\le
\frac{384}{\varepsilon^2}
\log\!\Big(\frac{6n}{\eta}\Big)\big((n-1)^3+1\big)
\\
&=
\tilde O\!\big(n^3\varepsilon^{-2}\big).
\end{aligned}
\]
\end{proof}

Thus, for this family, adaptivity reduces recovery to a sequence of local three-way tests and achieves the stated copy-complexity guarantee. We next contrast this with the non-adaptive setting.

\section{Non-Adaptive Lower Bound}
\label{sec:nonadaptive_lower}

We now turn to the proof of Theorem~\ref{thm:nonad_lb_full}. Fix an arbitrary non-adaptive design over $\cB$. For each prefix $u\in\{X,Y,Z\}^{n-1}$, define
\[
\cB_u:=\{(u,X),(u,Y),(u,Z)\},
\]
the three measurement settings that share the first $n-1$ coordinates specified by $u$. We refer to $\cB_u$ as the prefix cylinder associated with $u$. The proof proceeds in three steps. We first identify a prefix cylinder that receives only a small fraction of the measurement budget. We then construct two states that agree on the common prefix and differ only in the final coordinate. This choice implies that measurement settings outside the selected cylinder induce the same one-shot distribution under both states, so any distinguishing information is concentrated on the small amount of budget assigned to that cylinder.

\begin{lemma}[A rarely sampled prefix cylinder]
\label{lem:rare_cylinder}
Let
\[
\mathcal{U}_{n-1}:=\{X,Y,Z\}^{n-1},
\]
\[
\cB_u:=\{(u,X),(u,Y),(u,Z)\}\subset \cB,
\quad \text{for } u\in\mathcal{U}_{n-1}.
\]
Then $\{\cB_u : u\in\mathcal{U}_{n-1}\}$ is a partition of $\cB$. Moreover, for every non-adaptive design there exists a prefix \(u^\star\in\mathcal{U}_{n-1}\) such that
\begin{equation}
\mu(\cB_{u^\star})\le 3^{-(n-1)}.
\label{eq:rare_cylinder}
\end{equation}
\end{lemma}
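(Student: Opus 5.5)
The plan is a pigeonhole argument, preceded by a short check that the cylinders partition $\cB$. I first need to say what $\mu$ is, since the statement uses it without a prior definition. For a non-adaptive design with budget $M$, write $N_b$ for the number of shots allocated to setting $b$. For a randomized schedule whose randomness is independent of the outcomes, take $N_b$ to be the expected allocation. Set
\[
\mu(b) := \bbE[N_b]/M, \qquad \mu(A) := \sum_{b\in A}\mu(b) \quad \text{for } A\subseteq\cB.
\]
Then $\mu$ is a probability measure on $\cB$: it is nonnegative, and it sums to one because $\sum_b N_b = M$ holds deterministically. Non-adaptivity is what makes $\mu$ a fixed object independent of the unknown state $\rho_{b^\star}$. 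That independence is the only reason the lemma is useful later, although the lemma itself only needs $\mu$ to be a probability distribution on $\cB$.

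Next, the partition claim. The map $b=(b_1,\dots,b_n)\mapsto (b_1,\dots,b_{n-1})\in\mathcal U_{n-1}$ is well defined on all of $\cB$. Its fiber over $u$ is exactly $\{(u,a): a\in\{X,Y,Z\}\}=\cB_u$. Fibers of a function are pairwise disjoint and cover the domain, and each $\cB_u$ is nonempty with three elements. Hence $\{\cB_u\}$ partitions $\cB$ into $3^{n-1}$ blocks, consistent with $3\cdot 3^{n-1}=3^n=|\cB|$.

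For the existence claim, countable additivity (here, finite additivity) of $\mu$ over the partition gives
\[
\sum_{u\in\mathcal U_{n-1}} \mu(\cB_u) = \mu(\cB) = 1 .
\]
This is a sum of $3^{n-1}$ nonnegative terms equal to $1$. If every term exceeded $3^{-(n-1)}$, the sum would exceed $1$, which is a contradiction. So some $u^\star$ satisfies $\mu(\cB_{u^\star})\le 3^{-(n-1)}$, for instance any minimizer of $u\mapsto\mu(\cB_u)$, which exists because $\mathcal U_{n-1}$ is finite.

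The only real obstacle is definitional: making sure $\mu$ is defined so that this bound transfers to the KL computation later. Using expected counts, $\bbE[N_b]$, is the right choice. The KL divergence of the full transcript under a non-adaptive design decomposes as $\sum_b \bbE[N_b]\,\KL\bigl(p(\cdot\mid b,\rho)\,\|\,p(\cdot\mid b,\rho')\bigr)$ even for randomized schedules, by the chain rule conditioned on the independent schedule randomness. So stating the lemma for the normalized expected allocation $\mu$ is exactly what the subsequent two-point argument needs.
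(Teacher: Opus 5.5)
Your proof is correct and follows the paper's argument essentially verbatim: the paper also defines $\mu(b)$ as the empirical fraction of shots on $b$ (the expected fraction for randomized non-adaptive schedules), notes that the cylinders $\cB_u$ partition $\cB$ so that $\sum_u \mu(\cB_u)=1$, and concludes by pigeonhole over the $3^{n-1}$ prefixes. Your fiber-of-the-truncation-map description of the partition and your remark that the expected-count normalization is what feeds the later KL chain-rule bound both match how the paper uses the lemma.
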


\begin{proof}
The sets $\cB_u$ are disjoint and their union is all of $\cB=\{X,Y,Z\}^n$, so they form a partition. Let $\mu$ denote the allocation induced by the non-adaptive design on $\cB$, with $\mu(b)\ge 0$ and $\sum_{b\in\cB}\mu(b)=1$. For a deterministic $M$-shot design, $\mu(b)$ is the empirical fraction of times that setting $b$ is used; for a randomized non-adaptive design, $\mu(b)$ is the corresponding expected fraction. Therefore
\[
\sum_{u\in\mathcal{U}_{n-1}} \mu(\cB_u)=1.
\]
Since $|\mathcal{U}_{n-1}|=3^{n-1}$, the pigeonhole principle yields a prefix $u^\star$ such that $\mu(\cB_{u^\star})\le 3^{-(n-1)}$.
\end{proof}

We now fix such a rare prefix $u^\star$ for the given non-adaptive design.

\begin{lemma}[Hard pair with prefix cancellation]
\label{lem:hard_pair_prefix}
Define the hard basis string pair
\[
b^{(0)}:=(u^\star,X),
\qquad
b^{(1)}:=(u^\star,Y)
\]
and corresponding density matrices $\rho^{(0)}:=\rho_{b^{(0)}}$ and $\rho^{(1)}:=\rho_{b^{(1)}}$.
Then
\begin{equation}
\rho^{(0)}-\rho^{(1)}
=
\frac{\alpha}{D}\Big(P^{(n)}_{b^{(0)}}-P^{(n)}_{b^{(1)}}\Big).
\label{eq:rho_diff_two_point}
\end{equation}
Moreover,
\begin{equation}
\frac12\|\rho^{(0)}-\rho^{(1)}\|_1
=
\frac{|\alpha|}{\sqrt{2}}.
\label{eq:two_point_trdist}
\end{equation}
\end{lemma}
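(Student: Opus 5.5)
First I will derive the difference formula by invoking Lemma~\ref{lem:prefix_family_trace_geometry}. The strings $b^{(0)}=(u^\star,X)$ and $b^{(1)}=(u^\star,Y)$ agree on their first $n-1$ coordinates, so their first mismatch is at $r=n$. Equation~\eqref{eq:family_difference_first_mismatch} then has an empty $\beta$-sum, and only the $\alpha$ term survives; this is exactly \eqref{eq:rho_diff_two_point}. Equivalently, one can check directly that $P^{(k)}_{b^{(0)}}=P^{(k)}_{b^{(1)}}$ for all $k\le n-1$, since these operators only involve the shared prefix $u^\star$.

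For the trace distance, I will write the two full strings as $P_0:=P^{(n)}_{b^{(0)}}=Q\otimes\sigma_X$ and $P_1:=P^{(n)}_{b^{(1)}}=Q\otimes\sigma_Y$, where $Q:=\bigotimes_{i=1}^{n-1}\sigma_{u^\star_i}$ is Hermitian and unitary. This gives
\[
P_0-P_1=Q\otimes(\sigma_X-\sigma_Y),
\]
and trace norms multiply over tensor products. Since $Q$ is Hermitian unitary on $2^{n-1}$ dimensions, $\|Q\|_1=2^{n-1}=D/2$. The $2\times2$ matrix $\sigma_X-\sigma_Y$ is Hermitian and traceless, and squaring it gives $(\sigma_X-\sigma_Y)^2=2I-\{\sigma_X,\sigma_Y\}=2I$ by anticommutation. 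Its eigenvalues are therefore $\pm\sqrt2$, so $\|\sigma_X-\sigma_Y\|_1=2\sqrt2$.

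Combining these pieces:
\[
\tfrac12\|\rho^{(0)}-\rho^{(1)}\|_1=\tfrac12\cdot\tfrac{|\alpha|}{D}\cdot\tfrac{D}{2}\cdot 2\sqrt2=\tfrac{|\alpha|}{\sqrt2}.
\]
This establishes \eqref{eq:two_point_trdist}.

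There is no real obstacle in this argument. The only point that needs care is the spectral computation: it relies on the anticommutation of distinct single-qubit Paulis, and on the fact that the trace norm factorizes across the tensor product. An alternative is to observe that $P_0$ and $P_1$ anticommute, so $(P_0-P_1)^2=2I$ and every eigenvalue of $P_0-P_1$ is $\pm\sqrt2$. Then $\|P_0-P_1\|_1=\sqrt2\,D$, which gives the same result.
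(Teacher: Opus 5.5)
Your proposal is correct and follows essentially the same route as the paper: cancellation of all prefix terms of depth at most $n-1$ gives the difference formula, and the trace norm follows from the factorization $Q\otimes(\sigma_X-\sigma_Y)$, using $\|Q\|_1=2^{n-1}$, $(\sigma_X-\sigma_Y)^2=2I$, and multiplicativity of the trace norm over tensor products. Your alternative observation that $P_0$ and $P_1$ anticommute, so that $(P_0-P_1)^2=2I$ directly, is a valid shortcut that reaches the same value $\|P_0-P_1\|_1=\sqrt{2}\,D$.
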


\begin{proof}
Because $b^{(0)}$ and $b^{(1)}$ share the same prefix of length $n-1$, we have $P^{(k)}_{b^{(0)}} = P^{(k)}_{b^{(1)}} = P^{(k)}_{u^\star}$ for $k=1,\dots,n-1$. Thus all prefix contributions cancel, which gives \eqref{eq:rho_diff_two_point}.

Since the two full strings differ only in the final coordinate, there exists a common $(n-1)$-qubit Pauli string $Q$ such that $P^{(n)}_{b^{(0)}} = Q \otimes \sigma_X$ and
$P^{(n)}_{b^{(1)}} = Q \otimes \sigma_Y$. Hence $P^{(n)}_{b^{(0)}} - P^{(n)}_{b^{(1)}}
=
Q \otimes (\sigma_X-\sigma_Y)$. Using \eqref{eq:rho_diff_two_point}, we obtain
\[
\frac12\|\rho^{(0)}-\rho^{(1)}\|_1
=
\frac{|\alpha|}{2D}\,
\bigl\|Q\otimes(\sigma_X-\sigma_Y)\bigr\|_1.
\]

We now evaluate the two trace norms. Since $Q$ is an $(n-1)$-qubit Pauli string, it is a Hermitian unitary with eigenvalues $\pm 1$. Therefore all of its singular values are equal to $1$, and $\|Q\|_1 = 2^{n-1}$. Also, $(\sigma_X-\sigma_Y)^2 = 2\id$, so the eigenvalues of $\sigma_X-\sigma_Y$ are $\pm \sqrt{2}$. Hence $\|\sigma_X-\sigma_Y\|_1 = 2\sqrt{2}$. Using the multiplicativity of the trace norm over tensor products, $\|A\otimes B\|_1=\|A\|_1\|B\|_1$, we conclude that
\[
\bigl\|Q\otimes(\sigma_X-\sigma_Y)\bigr\|_1
=
\|Q\|_1\,\|\sigma_X-\sigma_Y\|_1
=
2^{n-1}\cdot 2\sqrt{2}.
\]
Since $D=2^n$, it follows that
\[
\frac12\|\rho^{(0)}-\rho^{(1)}\|_1
=
\frac{|\alpha|}{2\cdot 2^n}\,(2^{n-1})(2\sqrt{2})
=
\frac{|\alpha|}{\sqrt{2}},
\]
which proves \eqref{eq:two_point_trdist}.
\end{proof}

We thus choose the two hard instances to agree on the first $n-1$ coordinates and differ only in the last. As a result, all shorter-prefix terms are identical under the two states, and only the full-depth term contributes to their difference. A key fact in our analysis is that any measurement basis $b$ whose prefix is not $u^\star$ will not aid in distinguishing $\rho^{(0)}$ from $\rho^{(1)}$.

To show this, we write the common prefix-only part as
\begin{equation}
\rho_{\mathrm{pref}}
:=
\frac{1}{D}
\left(
\id+\sum_{k=1}^{n-1}\beta_k P^{(k)}_{u^\star}
\right).
\label{eq:rho_pref_def}
\end{equation}
Then
\begin{equation}
\rho^{(j)}
=
\rho_{\mathrm{pref}}+\frac{\alpha}{D}P^{(n)}_{b^{(j)}},
\qquad
j\in\{0,1\}.
\label{eq:rho_pref_plus_full}
\end{equation}
We are now ready to prove the following key result.

\begin{lemma}[Outside the selected prefix cylinder, the one-shot laws coincide]
\label{lem:outside_cylinder_coincide}
If $b\notin \cB_{u^\star}$, then
\[
p(\cdot\mid b,\rho^{(0)})=p(\cdot\mid b,\rho^{(1)}).
\]
Consequently,
\begin{equation}
D_{\KL}\!\big(p(\cdot\mid b,\rho^{(0)})\,\|\,p(\cdot\mid b,\rho^{(1)})\big)=0.
\label{eq:outside_cylinder_zero_KL}
\end{equation}
Moreover, the same conclusion holds for the third basis inside the cylinder, \(b=(u^\star,Z).\)
\end{lemma}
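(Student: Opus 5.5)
Using the decomposition \eqref{eq:rho_pref_plus_full}, the difference of the one-shot laws for any basis $b$ is
\[
p(o\mid b,\rho^{(0)})-p(o\mid b,\rho^{(1)})
=\frac{\alpha}{D}\,\Tr\!\Big(\Pi_o^{(b)}\big(P^{(n)}_{b^{(0)}}-P^{(n)}_{b^{(1)}}\big)\Big),
\]
since the common part $\rho_{\mathrm{pref}}$ cancels. It therefore suffices to show that for every $b\notin\cB_{u^\star}$, and for $b=(u^\star,Z)$, each term $\Tr(\Pi_o^{(b)}P^{(n)}_{b^{(j)}})$ vanishes. The KL statement \eqref{eq:outside_cylinder_zero_KL} then follows immediately, because identical distributions have zero relative entropy.

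The key step is a general fact about product-basis projectors. Write $\Pi_o^{(b)}=\bigotimes_{i=1}^n \frac{1}{2}\big(\id+\lambda_i(o_i)\sigma_{b_i}\big)$, where $\lambda_i(o_i)\in\{\pm1\}$ is the local eigenvalue as in \eqref{eq:Skb}. For a full-weight Pauli string $P=\bigotimes_i \sigma_{c_i}$ with every $c_i\in\{X,Y,Z\}$, the trace factorizes as
\[
\Tr\big(\Pi_o^{(b)}P\big)=\prod_{i=1}^n \tfrac12\Tr\big((\id+\lambda_i\sigma_{b_i})\sigma_{c_i}\big)=\prod_{i=1}^n \lambda_i\,\delta_{b_i,c_i},
\]
using $\Tr(\sigma_c)=0$ and $\Tr(\sigma_b\sigma_c)=2\delta_{bc}$. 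Hence $\Tr(\Pi_o^{(b)}P)=0$ for every outcome $o$ unless $b=c$ coordinatewise. In words, a full-weight Pauli string is invisible in any product basis that does not match it exactly. Equivalently, $\bbE[\text{any function of }O]$ depends only on the Pauli terms of $\rho$ that are diagonal in basis $b$.

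Applying this with $P=P^{(n)}_{b^{(0)}}=(u^\star,X)$ and $P=P^{(n)}_{b^{(1)}}=(u^\star,Y)$:
\begin{itemize}
\item If $b\notin\cB_{u^\star}$, the first $n-1$ coordinates of $b$ differ from $u^\star$ somewhere, so both traces vanish.
\item If $b=(u^\star,Z)$, the last coordinate $Z$ differs from both $X$ and $Y$, so both traces again vanish.
\end{itemize}
In both cases the laws coincide for every outcome $o$.

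I do not expect a real obstacle. The one point needing care is the factorization step. The projector has identity components, so one might worry that lower-weight terms survive. This is exactly why the argument must use the fact that $P^{(n)}$ has full weight, with no identity factors, so that each single-site factor $\Tr((\id\pm\sigma_{b_i})\sigma_{c_i})/2$ is $\pm\delta_{b_i,c_i}$ and never $1$. The prefix terms that do have identity factors are harmless because they sit in $\rho_{\mathrm{pref}}$ and have already cancelled.
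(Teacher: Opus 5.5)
Your proposal is correct and follows the paper's proof essentially step for step. You cancel $\rho_{\mathrm{pref}}$ via \eqref{eq:rho_pref_plus_full}, factor the product projector so that $\Tr(\Pi_o^{(b)}P^{(n)}_{b'})$ vanishes whenever $b\neq b'$, and observe that both $b\notin\cB_{u^\star}$ and $b=(u^\star,Z)$ differ from $(u^\star,X)$ and $(u^\star,Y)$. Your explicit formula $\prod_i \lambda_i\delta_{b_i,c_i}$ is just a more concrete form of the paper's remark that an eigenprojector of $\sigma_{b_i}$ has zero expectation against a different Pauli.
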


\begin{proof}
Fix a basis $b=(b_1,\dots,b_n)\in\cB$ and an outcome $o=(o_1,\dots,o_n)\in\{0,1\}^n$. Since the measurement is a product-Pauli basis measurement, the corresponding product measurement projector factors as
\[
\Pi_o^{(b)}=\bigotimes_{i=1}^n \pi_{o_i}^{(b_i)},
\]
where each local projector is $\pi_{o_i}^{(b_i)}:=\frac12\bigl(\id+(-1)^{o_i}\sigma_{b_i}\bigr)$ for $o_i\in\{0,1\}$. Thus $\pi_{0}^{(b_i)}$ and $\pi_{1}^{(b_i)}$ are the two rank-one eigenprojectors of $\sigma_{b_i}$.

Now let
\[
P_{b'}^{(n)}=\bigotimes_{i=1}^n \sigma_{b'_i}
\]
be any full Pauli string. By multiplicativity of the trace over tensor products,
\[
\Tr\!\big(\Pi_o^{(b)}P_{b'}^{(n)}\big)
=
\prod_{i=1}^n \Tr\!\big(\pi_{o_i}^{(b_i)}\sigma_{b'_i}\big).
\]
If $b\neq b'$, then there exists a coordinate $i_0$ such that $b_{i_0}\neq b'_{i_0}$. Because $\pi_{o_{i_0}}^{(b_{i_0})}$ is an eigenprojector of $\sigma_{b_{i_0}}$, its expectation against the different Pauli $\sigma_{b'_{i_0}}$ is zero. Hence $\Tr\!\big(\Pi_o^{(b)} P_{b'}^{(n)}\big)=0$
 for all $o$ whenever $b\neq b'$.

Now suppose $b\notin \cB_{u^\star}$. Then $b$ disagrees with $u^\star$ in at least one of the first $n-1$ coordinates, so it cannot equal either $b^{(0)}=(u^\star,X)$ or $b^{(1)}=(u^\star,Y)$. Therefore
\[
\Tr\!\big(\Pi_o^{(b)} P^{(n)}_{b^{(0)}}\big)
=
\Tr\!\big(\Pi_o^{(b)} P^{(n)}_{b^{(1)}}\big)
=
0
\qquad\text{for all }o.
\]
Using \eqref{eq:rho_pref_plus_full},
\[
\begin{aligned}
p(o\mid b,\rho^{(j)})
&=
\Tr\!\big(\Pi_o^{(b)}\rho_{\mathrm{pref}}\big)
+
\frac{\alpha}{D}\Tr\!\big(\Pi_o^{(b)} P^{(n)}_{b^{(j)}}\big)
\\
&=
\Tr\!\big(\Pi_o^{(b)}\rho_{\mathrm{pref}}\big).
\end{aligned}
\]
for both $j=0,1$. Hence the one-shot laws are identical, proving \eqref{eq:outside_cylinder_zero_KL}.

Finally, if $b=(u^\star,Z)$, then again $b\neq b^{(0)}$ and $b\neq b^{(1)}$, so the same argument applies.
\end{proof}

\begin{lemma}[Transcript KL is bounded by the allocation on the selected prefix cylinder]
\label{lem:kl_contract_nonad}
Let
\[
T=(B_1,O_1,\dots,B_M,O_M)
\]
be the full transcript, and let $P^{(j)}_{\mu,M}$ denote the distribution of $T$ when the true state is $\rho^{(j)}$ and the non-adaptive design $\mu$ is used. Suppose there exists a constant $\delta\in(0,1)$ such that for each of the two settings \(b\in\{b^{(0)},b^{(1)}\}\) and every outcome $o\in\{0,1\}^n$,
\begin{equation}
p(o\mid b,\rho_{\mathrm{pref}})\ge \delta\,2^{-n}.
\label{eq:baseline_lower_bound}
\end{equation}
Then there exists an absolute constant $C<\infty$ such that
\begin{equation}
\begin{aligned}
D_{\KL}\!\big(P^{(0)}_{\mu,M}\,\|\,P^{(1)}_{\mu,M}\big)
&\le
C\,M\,\mu(\cB_{u^\star})\,\mathrm{kl}(\alpha)
\\
&\le
\frac{C\,M}{3^{n-1}}\,\mathrm{kl}(\alpha).
\end{aligned}
\label{eq:transcript_KL_contraction}
\end{equation}
where $\mathrm{kl}(\alpha)$ is as defined in Theorem~\ref{thm:nonad_lb_full}.
\end{lemma}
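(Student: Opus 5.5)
The plan is to combine the chain rule for KL divergence over the transcript with \Cref{lem:outside_cylinder_coincide}, and then to bound the KL divergence of the two single settings $b^{(0)},b^{(1)}$ inside the cylinder by an explicit two-point computation.

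\textbf{Step 1: reduce to a single shot.} Since the design is non-adaptive, the law of the bases $(B_1,\dots,B_M)$ does not depend on the state, so it contributes nothing to the divergence. Conditionally on the bases, the outcomes are independent with laws $p(\cdot\mid B_t,\rho^{(j)})$. The chain rule therefore gives
\[
D_{\KL}\big(P^{(0)}_{\mu,M}\,\|\,P^{(1)}_{\mu,M}\big)=\sum_{t=1}^M \bbE\Big[D_{\KL}\big(p(\cdot\mid B_t,\rho^{(0)})\,\|\,p(\cdot\mid B_t,\rho^{(1)})\big)\Big]=M\sum_{b\in\cB}\mu(b)\,K_b,
\]
where $K_b$ is the one-shot divergence at setting $b$ and $\mu$ is the (expected) allocation from \Cref{lem:rare_cylinder}. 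By \Cref{lem:outside_cylinder_coincide}, $K_b=0$ unless $b\in\{b^{(0)},b^{(1)}\}$. It therefore suffices to show $K_{b^{(0)}},K_{b^{(1)}}\le C\,\mathrm{kl}(\alpha)$. Then the sum is at most $C\,M\,\mu(\cB_{u^\star})\,\mathrm{kl}(\alpha)$, and the second inequality in \eqref{eq:transcript_KL_contraction} follows from \eqref{eq:rare_cylinder}.

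\textbf{Step 2: explicit one-shot laws on $b^{(0)}$.} Measuring in basis $b=b^{(0)}$ diagonalizes every prefix observable $P^{(k)}_{u^\star}$ as well as $P^{(n)}_{b^{(0)}}$, with eigenvalues $S_{k,b}(o)$ from \eqref{eq:Skb}. Using \eqref{eq:rho_pref_plus_full} and the orthogonality computation in the proof of \Cref{lem:outside_cylinder_coincide} (which gives $\Tr(\Pi_o^{(b)}P^{(n)}_{b^{(1)}})=0$), I expect
\[
p(o\mid b,\rho^{(0)})=\tfrac{1}{D}\big(1+a(o)+\alpha S_{n,b}(o)\big),\qquad p(o\mid b,\rho^{(1)})=\tfrac1D\big(1+a(o)\big),
\]
where $a(o)=\sum_{k<n}\beta_k S_{k,b}(o)$ depends only on $o_1,\dots,o_{n-1}$. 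Flipping $o_n$ flips $S_{n,b}$ and leaves $a$ unchanged. Hence both laws have the same marginal on the first $n-1$ bits. Conditionally on those bits, $o_n$ is uniform under $\rho^{(1)}$ and is Bernoulli with bias $\tfrac12(1\pm\gamma)$, $\gamma:=\alpha/(1+a(o))$, under $\rho^{(0)}$. The chain rule then gives $K_{b^{(0)}}=\bbE\,\mathrm{kl}(\gamma)$. The setting $b^{(1)}$ is symmetric with the roles reversed, giving $\bbE\,D_{\KL}(\mathrm{Unif}\,\|\,\mathrm{Bern}(\tfrac{1\pm\gamma}{2}))$.

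\textbf{Step 3: compare with $\mathrm{kl}(\alpha)$ (the main obstacle).} I will use the baseline condition \eqref{eq:baseline_lower_bound}, equivalently $1+a(o)\ge\delta$ (which is also implied by $\sum_k|\beta_k|\le1-\delta$), so that $|\gamma|\le|\alpha|/\delta$. The hard part is controlling both divergences uniformly when $\gamma$ might approach $\pm1$, because the reverse divergence blows up there. I would handle this by the $\chi^2$ bound: each divergence is at most $\chi^2$, which is of order $\gamma^2/(1-\gamma^2)$, and this is $O(\alpha^2/\delta^2)$ provided $1\pm\gamma$ is bounded below. That lower bound is exactly positivity of $\rho^{(0)}$ in this basis. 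To make it quantitative, I would first try to check that $1+a\pm\alpha\ge c\,\delta$ under the theorem's hypotheses; failing that, I would restrict to $|\alpha|\le\delta/2$, which is the small-$\alpha$ regime relevant for the theorem. Finally, $\mathrm{kl}(\alpha)\ge\alpha^2/2$ for $|\alpha|<1$ converts the bound into $K_b\le C_\delta\,\mathrm{kl}(\alpha)$. Here $C_\delta$ depends only on the fixed constant $\delta$, which is the sense in which $C$ is absolute.
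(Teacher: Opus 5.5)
Your proposal is correct and has the same skeleton as the paper: the chain rule for a non-adaptive design, Lemma~\ref{lem:outside_cylinder_coincide} to discard every setting except $b^{(0)},b^{(1)}$, and a second-order bound on the two surviving one-shot divergences using the baseline $1+a(o)\ge\delta$. The routes differ in the one-shot step.

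The paper writes $p_0=q(1+r)$ and $p_1=q$ with $r(o)=\alpha s(o)/(Dq(o))$. It uses $\sum_o q(o)r(o)=0$ and $(1+x)\log(1+x)-x\le x^2$ to get $\sum_o q(o)r(o)^2\le\alpha^2/\delta$. It then asserts the same bound at $b^{(1)}$ and converts via $\mathrm{kl}(\alpha)=\Theta(\alpha^2)$.

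Your conditioning on the first $n-1$ outcomes gives the exact value $K_{b^{(0)}}=\bbE\,\mathrm{kl}(\gamma)$. This buys two things:
\begin{itemize}
\item It explains why $\mathrm{kl}(\alpha)$ appears in the statement at all, since the bound is exact when $\beta=0$.
\item It exposes a point the paper glosses over. At $b^{(1)}$ the divergence runs in the reverse direction, $D_{\KL}(\mathrm{Unif}\,\|\,\mathrm{Bern}(\tfrac{1\pm\gamma}{2}))$. The paper's inequality does not cover this case, and it is controlled only when $|\gamma|$ stays away from $1$. Your $\chi^2$ bound handles both directions at once.
\end{itemize}
Your Pinsker step $\mathrm{kl}(\alpha)\ge\alpha^2/2$ is also cleaner than the asymptotic $\Theta(\alpha^2)$. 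The one thing you lose is a factor of $\delta$. Bounding $\gamma^2$ pointwise by $\alpha^2/\delta^2$ is wasteful; averaging against the marginal weight $2(1+a)/D$ recovers the paper's $\alpha^2/\delta$ in both directions. This is immaterial for fixed $\delta$.

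On your hedge in Step 3: the first option fails, so the restriction on $\alpha$ is not optional. Under the theorem's hypotheses you may take $\sum_k|\beta_k|=1-\delta$ and $|\alpha|=\delta$. The family is still physical, since Lemma~\ref{lem:prefix_family_physicality} holds with equality. Because every sign pattern $(S_{1,b},\dots,S_{n,b})$ occurs, some outcome has $1+a(o)=\delta$ and $p(o\mid b^{(1)},\rho^{(1)})=0<p(o\mid b^{(1)},\rho^{(0)})$. Then $K_{b^{(1)}}=\infty$, and the lemma as literally stated fails for any design with $\mu(b^{(1)})>0$.

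The paper's proof silently imposes the same restriction (``for sufficiently small $|\alpha|$, we may assume $|r(o)|\le 1/2$''). You should therefore state $|\alpha|\le\delta/2$ explicitly as a hypothesis. With it you get $C=8/(3\delta^2)$, or order $1/\delta$ after averaging. As in the paper, this constant depends on $\delta$ rather than being truly absolute.
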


\begin{proof}
Under a non-adaptive design, the measurement settings are selected in advance and do not depend on the observed outcomes. Therefore, the KL chain rule gives
\begin{multline}
D_{\KL}\!\big(P^{(0)}_{\mu,M}\,\|\,P^{(1)}_{\mu,M}\big)
=
\sum_{t=1}^M
\mathbb{E}\!\Bigl[
D_{\KL}\!\Bigl(
p(\cdot\mid B_t,\rho^{(0)})
\\
\|\, p(\cdot\mid B_t,\rho^{(1)})
\Bigr)
\Bigr].
\label{eq:chain_rule_nonad}
\end{multline}
By Lemma~\ref{lem:outside_cylinder_coincide}, the per-shot KL term is exactly zero whenever \(B_t\notin \cB_{u^\star},\) and it is also zero when \(B_t=(u^\star,Z).\) Thus only the two settings $b^{(0)}=(u^\star,X)$ and
$b^{(1)}=(u^\star,Y)$ can contribute.

It therefore remains to bound the one-shot KL divergence for these two settings. We present the calculation for $B_t=b^{(0)}$, since the same argument applies to $B_t=b^{(1)}$. For each outcome $o\in\{0,1\}^n$, define
\[
q(o):=p(o\mid b^{(0)},\rho_{\mathrm{pref}})
=\Tr\!\big(\Pi_o^{(b^{(0)})}\rho_{\mathrm{pref}}\big).
\]
Since \eqref{eq:baseline_lower_bound} holds for each $b\in\{b^{(0)},b^{(1)}\}$, in particular for $b=b^{(0)}$ we have $q(o)\ge \delta\,2^{-n}$. Using \eqref{eq:rho_pref_plus_full}, we can write
\[
\begin{aligned}
p(o\mid b^{(0)},\rho^{(j)})
&=
q(o)
+
\frac{\alpha}{D}
\Tr\!\big(\Pi_o^{(b^{(0)})}P_{b^{(j)}}^{(n)}\big),
\\
&\qquad j\in\{0,1\}.
\end{aligned}
\]
Now set
\[
s(o):=\Tr\!\big(\Pi_o^{(b^{(0)})}P_{b^{(0)}}^{(n)}\big)\in\{+1,-1\},
\]
which is the eigenvalue of the full Pauli string $P_{b^{(0)}}^{(n)}$ associated with outcome $o$ in the basis $b^{(0)}$.

Since $b^{(1)}\neq b^{(0)}$, the two strings differ in at least one coordinate. Hence one local factor in the product expansion of \(\Tr\!\big(\Pi_o^{(b^{(0)})}P_{b^{(1)}}^{(n)}\big)\) is zero, and therefore $\Tr\!\big(\Pi_o^{(b^{(0)})}P_{b^{(1)}}^{(n)}\big)=0$. It follows that $p(o\mid b^{(0)},\rho^{(0)}) = q(o) + \frac{\alpha}{D}s(o)$ while
$p(o\mid b^{(0)},\rho^{(1)}) = q(o)$. Define
$r(o):=\frac{\alpha s(o)}{D q(o)}$. Using $|s(o)|=1$, $D=2^n$, and the lower bound on $q(o)$ above, we obtain
\[
|r(o)|
=
\frac{|\alpha||s(o)|}{Dq(o)}
\le
\frac{|\alpha|}{2^n \, \delta\,2^{-n}}
=
\frac{|\alpha|}{\delta}.
\]
Hence, for sufficiently small $|\alpha|$, we may assume $|r(o)|\le 1/2$ for all $o$.

Now define
\[
\begin{aligned}
p_0(o)
&:= p(o\mid b^{(0)},\rho^{(0)})
= q(o)\bigl(1+r(o)\bigr),
\\
p_1(o)
&:= p(o\mid b^{(0)},\rho^{(1)})
= q(o).
\end{aligned}
\]
Since both $p_0$ and $p_1$ are probability distributions on $\{0,1\}^n$, we have $\sum_o \bigl(p_0(o)-p_1(o)\bigr)=0$. Using the definitions of $p_0$ and $p_1$, this becomes $\sum_o q(o)\,r(o)=0$.

By the definition of KL divergence,
\begin{multline*}
D_{\KL}(p_0\|p_1)
=
\sum_o p_0(o)\log\!\frac{p_0(o)}{p_1(o)}
\\
=
\sum_o q(o)\bigl(1+r(o)\bigr)\log\!\bigl(1+r(o)\bigr).
\end{multline*}
Using $\sum_o q(o)r(o)=0$, we rewrite this as
\[
D_{\KL}(p_0\|p_1)
=
\sum_o q(o)\Bigl[\bigl(1+r(o)\bigr)\log\!\bigl(1+r(o)\bigr)-r(o)\Bigr].
\]
Since $|r(o)|\le 1/2$, we have in particular $r(o)>-1$. Hence we may apply the classical inequality $\log(1+x)\le x$ when
$x>-1$, which follows from the concavity of the logarithm and implies $(1+x)\log(1+x)-x
\le
(1+x)x-x
=
x^2$. Applying this with $x=r(o)$, we obtain
\begin{align*}
D_{\KL}\!\big(p_0\|p_1\big)
&\le
\sum_o q(o)\,r(o)^2
\\
&=
\sum_o \frac{\alpha^2}{D^{2}q(o)}
\\
&\le
\sum_o \frac{\alpha^2}{2^{2n}\,\delta\,2^{-n}}
\\
&=
\frac{\alpha^2}{\delta}.
\end{align*}
Since \(\mathrm{kl}(\alpha)=\Theta(\alpha^2)\) as \(\alpha\to 0\), the bound above implies
\[
D_{\KL}\!\big(p(\cdot\mid b^{(0)},\rho^{(0)})\,\|\,p(\cdot\mid b^{(0)},\rho^{(1)})\big)
\le
C\,\mathrm{kl}(\alpha)
\]
for some constant $C<\infty$ and all sufficiently small $|\alpha|$. The same bound holds for $B_t=b^{(1)}$.

Substituting these per-shot bounds into \eqref{eq:chain_rule_nonad} yields
\[
\begin{aligned}
D_{\KL}\!\big(P^{(0)}_{\mu,M}\,\|\,P^{(1)}_{\mu,M}\big)
&\le
C\sum_{t=1}^M
\mathbb{P}\!\big(B_t\in\{b^{(0)},b^{(1)}\}\big)\,\mathrm{kl}(\alpha)
\\
&\le
C\,M\,\mu(\cB_{u^\star})\,\mathrm{kl}(\alpha).
\end{aligned}
\]
The bound \eqref{eq:rare_cylinder} then gives the second inequality in \eqref{eq:transcript_KL_contraction}.
\end{proof}

The following lemma establishes that reliably distinguishing the two hard cases requires a sufficiently large KL divergence between the corresponding measurement distributions.

\begin{lemma}[Testing lower bound from KL divergence]
\label{lem:testing_KL_lower}
Let $P$ and $Q$ be distributions on the same measurable space, and let a possibly randomized test decide between
\[
H_0:P
\qquad\text{and}\qquad
H_1:Q.
\]
Let $A$ be the acceptance region for $H_1$, and define the type-I and type-II errors by $\alpha_{\mathrm{I}}:=P(A)$ and
$\alpha_{\mathrm{II}}:=Q(A^c)$. Then
\begin{equation}
D_{\KL}(P\|Q)
\ge
D_{\KL}\!\big(\mathrm{Bern}(\alpha_{\mathrm{I}})\,\|\,\mathrm{Bern}(1-\alpha_{\mathrm{II}})\big).
\label{eq:testing_KL_dp}
\end{equation}
In particular, if $\alpha_{\mathrm{I}}\le \eta$ and $\alpha_{\mathrm{II}}\le \eta$ for $\eta\in(0,1/4]$, then
\begin{equation}
D_{\KL}(P\|Q)\gtrsim \log\!\Big(\frac{1}{\eta}\Big).
\label{eq:testing_KL_log_eta}
\end{equation}
\end{lemma}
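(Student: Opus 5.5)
The plan is to apply the data-processing inequality for KL divergence to the (possibly randomized) test, and then lower-bound the resulting binary KL divergence explicitly.

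\textbf{Step 1: reduce to Bernoulli laws.} A randomized test can be realized as a deterministic test on an enlarged space $(T,U)$, where $U$ is independent auxiliary randomness with the same law under both hypotheses. By the chain rule, appending $U$ leaves the KL divergence unchanged, so it suffices to treat deterministic tests. Consider the map $\phi := \mathbf 1_A$. Under $P$ the pushforward is $\mathrm{Bern}(P(A)) = \mathrm{Bern}(\alpha_{\mathrm I})$. Under $Q$ it is $\mathrm{Bern}(Q(A)) = \mathrm{Bern}(1-\alpha_{\mathrm{II}})$. The data-processing inequality for KL divergence then gives \eqref{eq:testing_KL_dp} immediately.

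\textbf{Step 2: evaluate the binary divergence.} Write $d(a\|b) := a\log\frac ab + (1-a)\log\frac{1-a}{1-b}$. Assume $\alpha_{\mathrm I}\le\eta$ and $\alpha_{\mathrm{II}}\le\eta\le 1/4$, and set $a=\alpha_{\mathrm I}$, $b=1-\alpha_{\mathrm{II}}$, so $a\le\eta\le 1/4\le 3/4\le 1-\eta\le b$. The map $a\mapsto d(a\|b)$ is convex and vanishes at $a=b$, so it is decreasing on $[0,b]$. Likewise $b\mapsto d(a\|b)$ is increasing on $[a,1]$. Hence
\[
d(a\|b)\ \ge\ d(\eta\|1-\eta) = (1-2\eta)\log\frac{1-\eta}{\eta}.
\]

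\textbf{Step 3: bound the explicit expression.} Since $\eta\le1/4$, we have $1-2\eta\ge 1/2$ and $\frac{1-\eta}{\eta}\ge \frac{3}{4\eta}$. Therefore
\[
d(\eta\|1-\eta)\ \ge\ \tfrac12\log\frac{3}{4\eta}.
\]
For $\eta\le 1/4$ we have $\log(3/(4\eta)) \ge c\log(1/\eta)$ for a universal $c>0$; for example, $\log 3 \ge \tfrac{\log 3}{\log 4}\log(1/\eta)$ at the worst case $\eta = 1/4$, and the ratio improves as $\eta\to0$. This yields \eqref{eq:testing_KL_log_eta} with an absolute implicit constant.

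\textbf{Main obstacle.} The only delicate point is the monotonicity reduction in Step 2, which needs the ordering $a\le 1-\eta$ and $b\ge\eta$. Beyond that, one must check that the constant is uniform all the way up to $\eta=1/4$; this is exactly why the statement restricts $\eta$ to $(0,1/4]$.
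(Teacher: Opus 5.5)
Your proof is correct. The first half matches the paper; the second half takes a different and slightly sharper route.

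For \eqref{eq:testing_KL_dp}, the paper splits the KL sum over $A$ and $A^c$ and applies the log-sum inequality on each piece. That is exactly the proof of data processing for the indicator $\mathbf 1_A$ on a discrete transcript space. You invoke data processing directly. Unlike the paper, you also explicitly cover randomized tests (via independent auxiliary randomness, which leaves the KL unchanged) and general measurable spaces.

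For \eqref{eq:testing_KL_log_eta}, the paper bounds the two Bernoulli terms separately. It uses $\alpha_{\mathrm I}\log\frac{\alpha_{\mathrm I}}{1-\alpha_{\mathrm{II}}}\ge\alpha_{\mathrm I}\log\alpha_{\mathrm I}\ge-1/e$ and $(1-\alpha_{\mathrm I})\log\frac{1-\alpha_{\mathrm I}}{\alpha_{\mathrm{II}}}\ge(1-\eta)\log\frac{1-\eta}{\eta}$. It then spends the hypothesis $\eta\le 1/4$ on numerically absorbing the additive losses $-1/e$ and $\tfrac34\log\tfrac34$, ending with constant $\tfrac14$.

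You instead use monotonicity of $d(a\|b)$ in each argument away from the diagonal to reduce exactly to the extremal case $d(\eta\|1-\eta)=(1-2\eta)\log\frac{1-\eta}{\eta}$. This incurs no additive loss and gives the better constant $\tfrac{\log 3}{2\log 4}\approx 0.40$. It also shows that any $\eta$ bounded away from $1/2$ would do, so $1/4$ is a convenience rather than ``exactly'' what is needed. The trade-off is that the paper's version uses only $x\log x\ge -1/e$ and arithmetic, whereas yours needs the (easy) convexity and derivative facts for binary KL.

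One small slip is in your closing remark. The orderings actually used are $\eta\le b$ (to raise $a$ to $\eta$) and $\eta\le 1-\eta\le b$ (to lower $b$ to $1-\eta$). That is the chain $a\le\eta\le1-\eta\le b$ you correctly wrote in Step 2, not ``$a\le 1-\eta$ and $b\ge\eta$''.
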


\begin{proof}
Let $\Omega$ denote the set of all possible transcripts, and let $A\subseteq \Omega$ be the acceptance region for $H_1$. With $\alpha_{\mathrm{I}}=P(A)$ and
$\alpha_{\mathrm{II}}=Q(A^c)$, note that $P(A^c)=1-\alpha_{\mathrm{I}}$ and
$Q(A)=1-\alpha_{\mathrm{II}}$.

By definition,
\[
D_{\KL}(P\|Q)
=
\sum_{t\in\Omega} P(t)\log\!\frac{P(t)}{Q(t)}.
\]
Splitting the sum over $A$ and $A^c$, we obtain
\[
D_{\KL}(P\|Q)
=
\sum_{t\in A} P(t)\log\!\frac{P(t)}{Q(t)}
+
\sum_{t\in A^c} P(t)\log\!\frac{P(t)}{Q(t)}.
\]
Applying the log-sum inequality on $A$ gives
\[
\sum_{t\in A} P(t)\log\!\frac{P(t)}{Q(t)}
\ge
P(A)\log\!\frac{P(A)}{Q(A)},
\]
and applying the same inequality on $A^c$ gives
\[
\sum_{t\in A^c} P(t)\log\!\frac{P(t)}{Q(t)}
\ge
P(A^c)\log\!\frac{P(A^c)}{Q(A^c)}.
\]
Therefore
\[
D_{\KL}(P\|Q)
\ge
P(A)\log\!\frac{P(A)}{Q(A)}
+
P(A^c)\log\!\frac{P(A^c)}{Q(A^c)}.
\]
Setting $p:=P(A)$ and $q:=Q(A)$, the right-hand side becomes
\[
p\log\!\frac{p}{q}
+
(1-p)\log\!\frac{1-p}{1-q}.
\]
This is exactly $D_{\KL}\!\big(\mathrm{Bern}(p)\,\|\,\mathrm{Bern}(q)\big)$, the KL divergence between two Bernoulli distributions. Substituting $p=\alpha_{\mathrm{I}}$ and $q=1-\alpha_{\mathrm{II}}$,
we obtain $D_{\KL}(P\|Q)
\ge
D_{\KL}\!\big(\mathrm{Bern}(\alpha_{\mathrm{I}})\,\|\,\mathrm{Bern}(1-\alpha_{\mathrm{II}})\big)$, which proves \eqref{eq:testing_KL_dp}.

For the second claim, combine \eqref{eq:testing_KL_dp} with the formula for the KL divergence between two Bernoulli distributions:
\begin{align*}
D_{\KL}(P\|Q)
&\ge
D_{\KL}\!\big(
\mathrm{Bern}(\alpha_{\mathrm{I}})
\,\|\,
\mathrm{Bern}(1-\alpha_{\mathrm{II}})
\big)
\\
&=
\alpha_{\mathrm{I}}
\log\!\frac{\alpha_{\mathrm{I}}}{1-\alpha_{\mathrm{II}}} +
(1-\alpha_{\mathrm{I}})
\log\!\frac{1-\alpha_{\mathrm{I}}}{\alpha_{\mathrm{II}}}.
\end{align*}
Since $\alpha_{\mathrm{I}},\alpha_{\mathrm{II}}\le \eta < \frac{1}{2}$, we have
\[
\alpha_{\mathrm{I}}
\log\!\frac{\alpha_{\mathrm{I}}}{1-\alpha_{\mathrm{II}}}
\ge
\alpha_{\mathrm{I}}\log \alpha_{\mathrm{I}}.
\]
The function \(x \mapsto x\log x\), attains its minimum over \(x>0\) at \(x=e^{-1},\) with minimum value \(-\frac{1}{e}.\)
Hence $\alpha_{\mathrm{I}}\log \alpha_{\mathrm{I}} \ge -\frac{1}{e}$. Also, since $1-\alpha_{\mathrm{I}}\ge 1-\eta$
and
$\alpha_{\mathrm{II}}\le \eta$, we obtain
\[
(1-\alpha_{\mathrm{I}})
\log\!\frac{1-\alpha_{\mathrm{I}}}{\alpha_{\mathrm{II}}}
\ge
(1-\eta)\log\!\frac{1-\eta}{\eta}.
\]
Therefore
\[
D_{\KL}(P\|Q)
\ge
(1-\eta)\log\!\frac{1-\eta}{\eta}
-\frac{1}{e}.
\]
If \(\eta\le \frac14\), then
$1-\eta \ge \frac34$
and
$\log(1-\eta)\ge \log\frac34$.
Therefore
\begin{align*}
(1-\eta)\log\!\frac{1-\eta}{\eta}
&=
(1-\eta)\log\!\frac{1}{\eta}
\\
&\qquad
+
(1-\eta)\log(1-\eta)
\\
&\ge
\frac34 \log\!\frac{1}{\eta}
+
\frac34 \log\!\frac34.
\end{align*}
Hence
\[
D_{\KL}(P\|Q)
\ge
\frac34 \log\!\frac{1}{\eta}
+
\frac34 \log\!\frac34
-\frac{1}{e}.
\]
Now let \(L:=\log\!\frac{1}{\eta}.\) Since $\eta\le \frac14$, we have
$L\ge \log 4 > \frac54$.
Also,
$\log\!\frac34 > -\frac13$
and $\frac1e < \frac38$. Therefore $\frac34 \log\!\frac34 - \frac1e
>
-\frac14-\frac38
=
-\frac58$. Hence $D_{\KL}(P\|Q)
\ge
\frac34 L-\frac58$. Since \(L>\frac54,\) we have
$\frac12L>\frac58$. Therefore
$\frac34L-\frac58
=
\frac14L+\left(\frac12L-\frac58\right)
\ge
\frac14L$. Therefore
\[
D_{\KL}(P\|Q)
\ge
\frac14 \log\!\frac{1}{\eta},
\]
which proves \eqref{eq:testing_KL_log_eta}.
\end{proof}

Having established the above results, we are now ready to complete the proof of Theorem~\ref{thm:nonad_lb_full}.

\begin{proof}[Proof of Theorem~\ref{thm:nonad_lb_full}]
Fix any non-adaptive protocol and let $\mu$ be its induced allocation over $\cB$. By Lemma~\ref{lem:rare_cylinder}, there exists a rare prefix $u^\star$ satisfying \eqref{eq:rare_cylinder}. Construct the hard pair $\rho^{(0)},\rho^{(1)}$ as in Lemma~\ref{lem:hard_pair_prefix}. By \eqref{eq:two_point_trdist},
$\frac12\|\rho^{(0)}-\rho^{(1)}\|_1=\frac{|\alpha|}{\sqrt{2}}$.
Choose the universal constant $c_0>0$ in the theorem statement small enough that
$2c_0|\alpha| < \frac{|\alpha|}{\sqrt{2}}$. Any $c_0<1/(2\sqrt{2})$ suffices.

Assume now that the protocol satisfies the guarantee in Theorem~\ref{thm:nonad_lb_full}; namely, for every state $\rho$ in the family,
\[
\mathbb{P}_{\rho}\!\left(
\frac12\|\widehat\rho-\rho\|_1 \le c_0|\alpha|
\right)\ge 1-\eta.
\]
In particular, this holds when the true state is $\rho^{(0)}$ and also when the true state is $\rho^{(1)}$.

Since
$\frac12\|\rho^{(0)}-\rho^{(1)}\|_1=\frac{|\alpha|}{\sqrt{2}}$ and
$2c_0|\alpha| < \frac{|\alpha|}{\sqrt{2}}$, the two trace-distance balls of radius $c_0|\alpha|$ around $\rho^{(0)}$ and $\rho^{(1)}$ are disjoint. Therefore, given the estimator output $\widehat\rho$, we may define a binary test by comparing its trace distance to $\rho^{(0)}$ and $\rho^{(1)}$, and choosing whichever of the two is closer. Whenever $\frac12\|\widehat\rho-\rho^{(j)}\|_1 \le c_0|\alpha|$, this rule correctly identifies the true state $\rho^{(j)}$. Hence the resulting test has type-I and type-II errors both at most $\eta$.

Let $P:=P^{(0)}_{\mu,M}$ and
$Q:=P^{(1)}_{\mu,M}$ denote the distributions of the full transcript $T=(B_1,O_1,\dots,B_M,O_M)$ when the true state is $\rho^{(0)}$ and $\rho^{(1)}$, respectively. We now combine the lower and upper bounds on \(D_{\KL}(P\|Q)\). Lemma~\ref{lem:testing_KL_lower} gives the lower bound required by success probability at least \(1-\eta\), while Lemma~\ref{lem:kl_contract_nonad} gives the corresponding upper bound in terms of \(M\), \(\mu(\cB_{u^\star})\), and \(\mathrm{kl}(\alpha)\). Lemma~\ref{lem:testing_KL_lower} gives
\[
D_{\KL}(P\|Q)\gtrsim \log\!\Big(\frac{1}{\eta}\Big).
\]
For each of the two settings \(b\in\{b^{(0)},b^{(1)}\}\), the probabilities under \(\rho_{\mathrm{pref}}\) take the form
\[
\begin{aligned}
p(o\mid b,\rho_{\mathrm{pref}})
&=
2^{-n}
\left(
1+\sum_{k=1}^{n-1}\beta_k s_k(o)
\right),
\\
&\qquad
s_k(o)\in\{\pm 1\}.
\end{aligned}
\]
Therefore
\[
p(o\mid b,\rho_{\mathrm{pref}})
\ge
2^{-n}
\left(
1-\sum_{k=1}^{n-1}|\beta_k|
\right)
\ge
\delta\,2^{-n}.
\]
Thus the hypothesis \eqref{eq:baseline_lower_bound} in Lemma~\ref{lem:kl_contract_nonad} holds, and hence
\[
D_{\KL}(P\|Q)
\le
\frac{C\,M}{3^{n-1}}\mathrm{kl}(\alpha).
\]
Combining this with the lower bound from Lemma~\ref{lem:testing_KL_lower} gives
\[
M
\gtrsim
\frac{3^{n-1}}{\mathrm{kl}(\alpha)}
\log\!\Big(\frac{1}{\eta}\Big).
\]
If $|\alpha|$ is sufficiently small, then \(\mathrm{kl}(\alpha)=\Theta(\alpha^2)\), so this is equivalently
\[
M
=
\Omega\!\Big(\frac{3^n}{\alpha^2}\log(1/\eta)\Big),
\]
up to absolute constants. This proves the theorem.
\end{proof}

\begin{proof}[Proof of Corollary~\ref{cor:adaptive_nonadaptive_separation} statement $(ii)$]

Under the choice
\[
\alpha=\frac{\varepsilon}{4},
\qquad
\beta_k=\frac{\varepsilon}{4(n-1)},
\qquad
k=1,\dots,n-1,
\]
Section~\ref{sec:prefix_family} shows that the family is physical. Moreover,
\[
\sum_{k=1}^{n-1}|\beta_k|
=
(n-1)\cdot \frac{\varepsilon}{4(n-1)}
=
\frac{\varepsilon}{4}
\le
\frac14,
\]
since \(\varepsilon\in(0,1)\). Thus the coefficient condition in Theorem~\ref{thm:nonad_lb_full} holds with \(\delta=\frac34\).

We may therefore apply Theorem~\ref{thm:nonad_lb_full}. Since \(\alpha=\frac{\varepsilon}{4},\) and \(\mathrm{kl}(\alpha)=\Theta(\alpha^2)=\Theta(\varepsilon^2)\) for small \(\varepsilon\), it follows that every non-adaptive protocol that is \((c\varepsilon,\eta)\)-accurate over the family must satisfy
\[
M_{\mathrm{nonad}}
\ge
c\,3^{\,n-1}\varepsilon^{-2}\log\!\Big(\frac{1}{\eta}\Big)
=
\Omega\!\big(3^n\varepsilon^{-2}\log(1/\eta)\big),
\]
for a universal constant \(c>0\). This proves the corollary.
\end{proof}

\section{Numerical Illustration of the Scaling Gap}
\label{sec:numerical_illustration}

The previous sections establish explicit upper and lower bounds for adaptive and non-adaptive procedures under the same family and measurement model. To complement these theorem-level guarantees, we now examine how the predicted scaling gap appears numerically at finite problem sizes.

We use the concrete coefficient choice from Corollary~\ref{cor:adaptive_nonadaptive_separation}, with fixed accuracy scale $\varepsilon = 0.5$, and report only the likelihood-based decision rule (LLR). For the adaptive method, we simulate the same stagewise prefix-recovery strategy analyzed in the paper. For the non-adaptive baseline, we use a fully non-adaptive design that allocates measurements uniformly over all $3^n$ product-Pauli bases. For each value of $n$, we sweep the shot parameter $m$, estimate the success probability by Monte Carlo simulation, and record the smallest total budget achieving
\[
P_{\mathrm{success}} \ge 0.90.
\]

Figure~\ref{fig:llr_adaptive_budget} shows the required adaptive budget as a function of $n$. The numerical curve follows a clear polynomial trend and is well approximated by a cubic fit, consistent with the dependence on $n$ predicted by Corollary~\ref{cor:adaptive_nonadaptive_separation}.

\begin{figure}[t]
\centering
\includegraphics[width=0.84\linewidth]{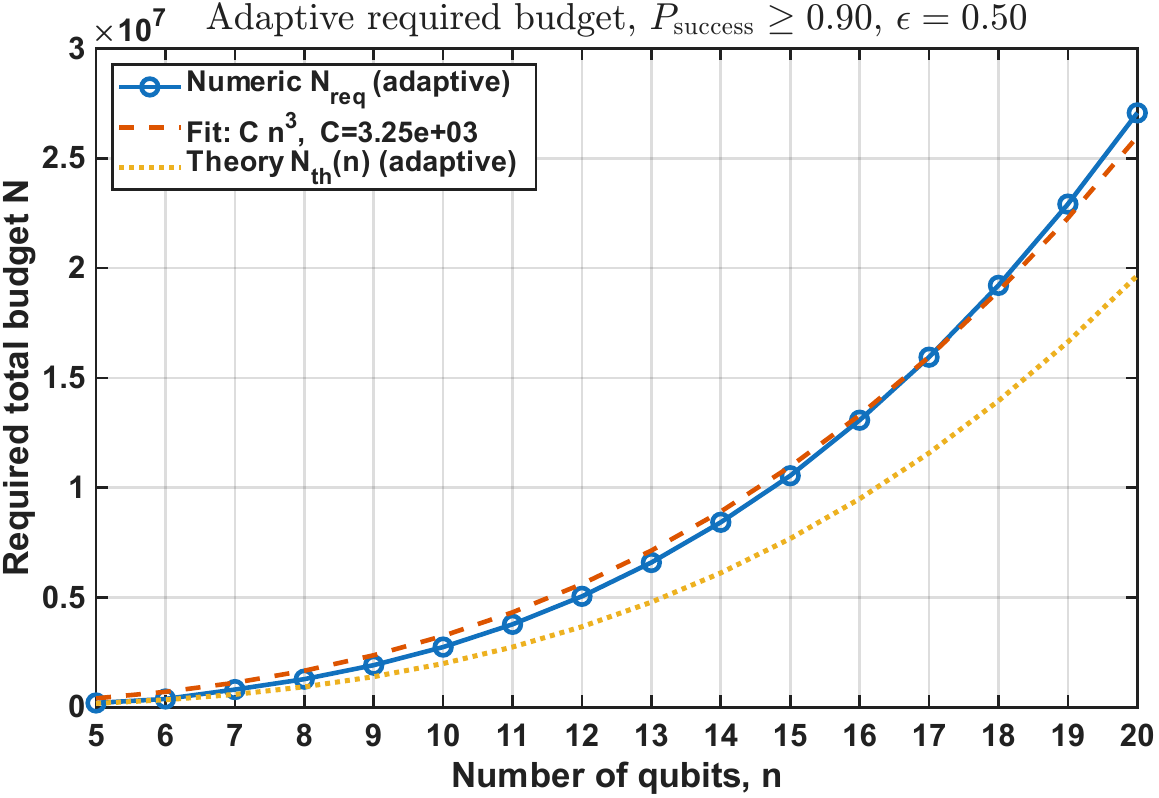}
\caption{Required adaptive budget for the LLR-based prefix-recovery rule, defined as the smallest total budget achieving $P_{\mathrm{success}} \ge 0.90$ at $\varepsilon = 0.5$. The numerical curve is well approximated by a cubic fit and is consistent with the polynomial trend predicted by the adaptive upper bound.}
\label{fig:llr_adaptive_budget}
\end{figure}

Figure~\ref{fig:llr_nonadaptive_budget} shows the corresponding non-adaptive baseline. In contrast to the adaptive case, the required budget grows very rapidly with $n$ and is well described by an exponential fit, in qualitative agreement with the worst-case non-adaptive lower bound.

\begin{figure}[t]
\centering
\includegraphics[width=0.84\linewidth]{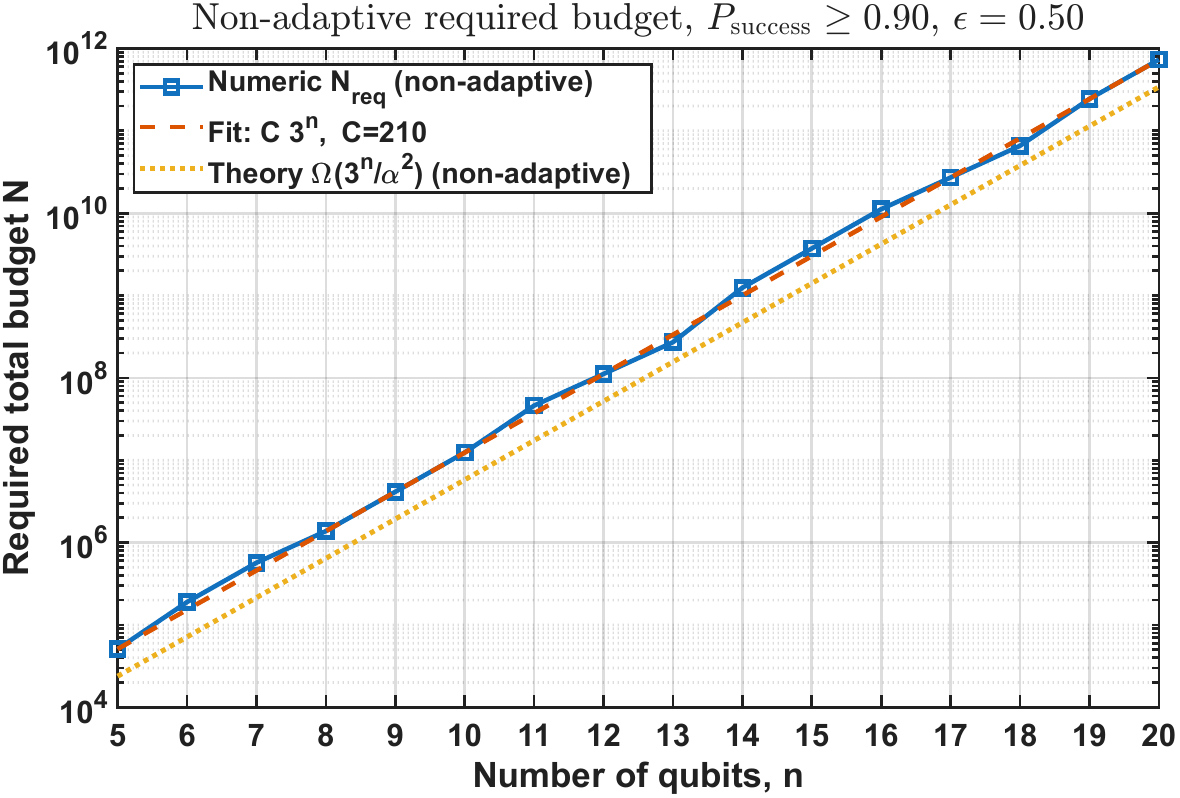}
\caption{Required non-adaptive budget for the LLR-based baseline, again defined by the threshold $P_{\mathrm{success}} \ge 0.90$ at $\varepsilon = 0.5$. The numerical curve exhibits clear exponential growth in $n$, in qualitative agreement with the worst-case non-adaptive lower bound.}
\label{fig:llr_nonadaptive_budget}
\end{figure}

The comparison is shown directly in Figure~\ref{fig:llr_direct_comparison}. As $n$ increases, the exponential growth of the non-adaptive budget rapidly dominates, while the adaptive budget remains on a much milder polynomial trajectory.

\begin{figure}[t]
\centering
\includegraphics[width=0.84\linewidth]{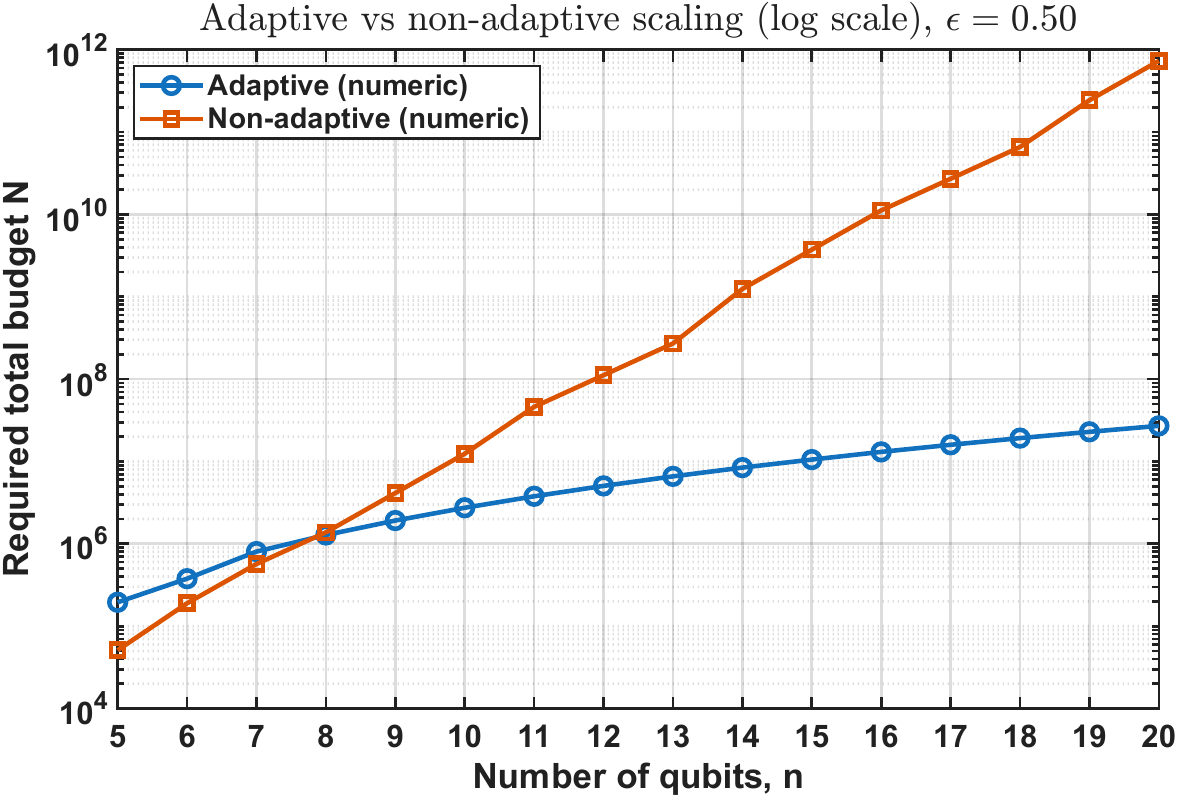}
\caption{Direct comparison of adaptive and non-adaptive required total budgets on the same scale. The adaptive LLR-based procedure remains on a much milder growth trajectory, while the non-adaptive baseline exhibits a rapid exponential increase.}
\label{fig:llr_direct_comparison}
\end{figure}

Taken together, these experiments provide a numerical illustration of the same scaling gap proved analytically in the paper. Under the present construction and measurement model, the adaptive advantage is clearly visible in the numerical results.

\section{Discussion and Future Directions}
\label{sec:discussion}

The results of this paper identify a concrete regime in which adaptivity changes the worst-case sample-complexity scaling. The regime is deliberately specific: the measurement architecture is fixed to single-copy tensor-product Pauli measurements, the target is restricted to the structured prefix/tree family \(\cF_n(\alpha,\beta)\), and success is measured in trace distance with high probability in a minimax sense over that family. Within this setting, the separation is sharp: the adaptive procedure succeeds with polynomial copy complexity, whereas every non-adaptive design requires exponentially many copies in the worst case.

This conclusion also helps situate the result within the broader tomography literature. The effect of adaptivity depends on the state family, the measurement model, and the performance criterion, and different choices of these ingredients can lead to different conclusions. In the highly structured regime studied here, adaptivity changes the scaling in a fundamental way. In this sense, the present result identifies a concrete tomography setting in which sequential basis selection yields a genuine worst-case advantage.

More broadly, this result fits into the growing effort to develop scalable alternatives to full tomography. This includes recent work on quantum-state learning complexity, methods for testing and certifying states, and classical-shadow-style prediction methods~\cite{AnshuArunachalam2024Survey,EisertHangleiterWalkRothMarkhamParekhChabaudKashefi2020,HuangKuengPreskill2020}.

We introduced the prefix/tree family as a structured construction to make the origin of the separation transparent. In this setting, an adaptive protocol can exploit hierarchical intermediate information stage by stage, whereas non-adaptive designs remain constrained by the rare-prefix bottleneck in the worst case. The construction therefore illustrates a broader theme in structured tomography: local measurements can become substantially more informative when the underlying state family exposes usable intermediate structure.

Several nearby questions remain open. One direction is to study whether similar gaps appear under average-case performance criteria, rather than worst-case minimax guarantees. Another is to understand whether the same basic separation mechanism can arise in structured families beyond the prefix/tree construction studied here.

\begin{acknowledgements}
We acknowledge funding support from NSF Grants No. CCF-2106834, CCF-2241298, and ECCS-2409701.
\end{acknowledgements}
\bibliographystyle{quantum}
\bibliography{mybibliography}

\end{document}